\documentclass[a4paper,11pt]{article}
\pdfoutput=1
\usepackage{jheppub}

\usepackage{amsmath,amssymb,amsthm,mathtools}
\usepackage{booktabs,array,graphicx}
\usepackage{tikz}
\usetikzlibrary{positioning}
\usepackage{float}
\usepackage{enumitem}
\usepackage{amsmath,amssymb}
\usetikzlibrary{decorations.pathmorphing,arrows.meta,calc}
 
\definecolor{bndC}{RGB}{0,114,178}    
\definecolor{bndCp}{RGB}{213,94,0}    
\definecolor{bulk}{RGB}{245,245,245}
\definecolor{edge}{RGB}{120,120,120}

\newcommand{\Z}{\mathbb{Z}}

\newcommand{\C}{\mathbb{C}}
\newcommand{\F}{\mathbb{F}}
\newcommand{\Lam}{\Lambda}
\newcommand{\Ot}{\mathrm{O}}
\newcommand{\su}{\mathfrak{su}}
\newcommand{\w}[1]{\mathtt{#1}} 

\theoremstyle{plain}
\newtheorem{theorem}{Theorem}[section]
\newtheorem{proposition}[theorem]{Proposition}
\newtheorem{lemma}[theorem]{Lemma}
\newtheorem{corollary}[theorem]{Corollary}
\theoremstyle{remark}

\numberwithin{equation}{section}

\title{TQFT gravity dual to the orbifold groupoid of Narain code CFTs over $\mathbb{Z}_4$}

\author{Rajae Sammani and Abdellah Sebbar}
\affiliation{Department of Mathematics and Statistics, University of Ottawa,\\
150 Louis-Pasteur Private, Ottawa, ON K1N 6N5, Canada}
\emailAdd{rsammani@uottawa.ca}
\emailAdd{asebbar@uottawa.ca}

\abstract{In a companion paper \cite{A3math}, we classified the Narain conformal field theories obtained by
gluing copies of the $A_3$ root and weight lattices along their $\mathbb{Z}_4$ discriminant
group. For two hyperbolic pairs, this gives twenty-two even self-dual gluing codes, which fall
into seven inequivalent theories at $c=\bar c=6$. In this paper, we develop the physical interpretation of this family and its connections to three-dimensional topological field theory.  We show that all twenty-two codes are
topological boundary conditions of a single abelian Chern-Simons theory. Its invertible
symmetries, which include fractional $B$-field shifts, organize the boundaries into three
orbits: free, mixed and torsion. Boundaries in different
orbits are related only by gauging. We determine all gaugings of order at most four between the
seven theories, by $\mathbb{Z}_2$, $\mathbb{Z}_4$ and $\mathbb{Z}_2\times\mathbb{Z}_2$, with
explicit twisted partition functions, and follow their effect on the spectrum and the current
algebra. We then sum the Chern-Simons partition function over handlebodies at genus one and
prove a finite Siegel-Weil identity, valid for any number of pairs. For two pairs, the
handlebody sum is the uniform average over the free theories. Recovering the average over all
twenty-two theories requires one additional contribution, a solid torus with the torsion lines
along its core. For three or more pairs, we find that the handlebody sum is not a positive combination of
code boundary states.}

\keywords{Chern-Simons Theories, Global and gauged Symmetries, Topological Field Theories, Code CFTs, Averaged Ensemble}

\begin{document}
\maketitle
\flushbottom
\section{Introduction}\label{sec:intro}
Error-correcting codes provide an algebraic framework for constructing Narain lattices and their associated CFTs via Construction A. This latter lifts a code to a lattice by selecting integer vectors whose reductions modulo the alphabet size belong to the code and then rescales them. When the code is self-dual with respect to the relevant bilinear form and satisfies the corresponding evenness condition, Construction A yields an even, self-dual lattice with respect to the Lorentzian pairing $\langle (p_L,p_R),(p_L',p_R')\rangle=p_L\cdot p_L'-p_R\cdot p_R'$,
as required for a consistent Narain CFT.

The lattice vectors encode the allowed left- and right-moving momenta $(p_L,p_R)$ of the vertex operators, with conformal weights $h=p_L^2/2$ and $\bar{h}=p_R^2/2$. The code therefore
constrains which momenta occur and, through them, the CFT spectrum. The torus partition function can be expressed in terms of the associated lattice theta series or, equivalently, using the weight enumerator, and the spectral data of the CFT, such as the gap and the degeneracies of light states, can be read from the algebraic structure of the code.

This correspondence was first developed for binary quantum stabilizer codes \cite{DS21} and has since been extended in two ways. One can either enlarge the alphabet, to codes over $\Z_N$ \cite{Alam23} and over cyclotomic integers \cite{MO25}, or replace the trivial base lattice by a root lattice, so that the code glues cosets of its weight lattice \cite{SS26}. When the gluing uses the left-minus-right discriminant form, the resulting theories are the level-one WZW models \cite{Ang25}. Because the code is a finite group, the symmetries of these theories and their gaugings can also be described in finite terms \cite{Kawabata:2023iss,AKN26,IMP26}.

Codes are useful because they replace continuous data with finite data, Narain moduli space is continuous, whereas the codes with a fixed alphabet and length form a finite set. Enumerating these codes therefore provides a finite classification of the associated theories. These constructions can identify rational points in moduli space, including points of enhanced symmetry and theories with large spectral gaps, which may be difficult to locate directly in the continuous moduli space. The code also provides direct access to the spectrum. 

From the bulk perspective, the average of Narain theories over moduli space, which is dual to the sum over handlebodies in $U(1)$ gravity \cite{MW20,AJCHT21}, also admits a discrete interpretation in terms of codes. Replacing the continuous moduli space with a finite set of code-based theories leads to an abelian Chern-Simons theory summed over topologies,
often referred to as TQFT gravity
\cite{Rae21,ADS23,Barbar:2023ncl,Dymarsky,DHM25,Bar25,Angelinos:2025tqft,Saidi:2024zam, Saidi:2026code}. The code theories then arise as topological boundary conditions of a single bulk theory. When $N$ is prime or square-free, the handlebody sum reproduces the uniform average over these theories \cite{ADS23,Dymarsky,DHM25}.

The split momentum-winding gluing of root lattices was introduced in \cite{MO25}, including
codes over $\Z_q$ with non-prime $q$. There, the codes have generator matrices of the form
$(I\,|\,B)$ and the Narain data are $G=I\otimes C^{-1}$, $B=B\otimes C^{-1}$; such codes are always free. When $q$ is not square-free, other Lagrangian codes exist, which this description does not encompass. On the holographic side, it is known that for non-square-free $N$ the handlebody sum alone does not reproduce the uniform average over all code theories, and topologies with line defects must be included \cite{Dymarsky,Angelinos:2025tqft}. Explicit computations have been carried out for odd prime powers \cite{Angelinos:2025tqft}, while the even case $\Z_4$ has been treated only for a single pair \cite{Dymarsky}. Additionally, the simplest family in which non-free codes appear is the
$A_3$ family over $\Z_4$, which is the subject of the companion paper \cite{A3math}. There, we arranged copies of $A_3$ in $d$ hyperbolic pairs, and the labels are glued with the split pairing
$\beta((a,b),(a',b'))=a\cdot b'+b\cdot a'$ on $\Z_4^d\oplus\Z_4^d$. An even self-dual
gluing code then defines a Narain CFT with $c=\bar c=3d$.

The first result of \cite{A3math} identifies the ordinary $\Z_4$ codes \cite{Alam23} that
arise from the $A_3$ gluing construction. The code
$E=\{000,220,202,022\}\subset\Z_4^3$ is twice the binary even-weight code of length three.
Its normalized Construction A lattice is $D_3\simeq A_3$, and $E^\perp/E\simeq\Z_4$
reproduces the discriminant group. Replacing each gluing label $a$ by the coset $ag+E$, with
$g=(1,1,1)$, turns every $A_3$ gluing code into an ordinary even self-dual code over $\Z_4$
of length $6d$. The two lattices coincide as subsets of momentum space, and the
identification preserves both the Lorentzian pairing and the positive energy form. The
ordinary codes obtained in this way are exactly those that contain $E^{\oplus 2d}$ and lie
in $(E^\perp)^{\oplus 2d}$.

The second result is a complete classification for one and two hyperbolic pairs. For $d=1$
there are three gluing codes and two block-monomial classes, at $c=3$. For $d=2$ there are
twenty-two codes. They are counted by reducing modulo two onto the split quadratic space
$\F_2^4$, which can be identified with $2\times2$ matrices with the determinant as quadratic
form. In these terms, the six totally singular planes give twelve codes of type $\Z_4^2$, the
nine nonzero singular vectors give nine codes of type $\Z_4\oplus\Z_2^2$, and the zero
subspace gives one code of type $\Z_2^4$. Under block-monomial equivalence the twenty-two
codes fall into seven classes, $F_0,F_1,F_2,M_0,M_1,M_2$ and $T$, with orbit sizes
$4,4,4,4,4,1,1$. For each class, \cite{A3math} gives an integral null basis, explicit
metric and $B$-field data, and the exact theta series. The seven theta series are
pairwise distinct, so the seven classes are seven distinct $\Ot(6,6;\Z)$ classes at
$c=\bar c=6$.

Several structural results go beyond the classification. First, for every $d$ the chiral
root system on each side is $A_1^{3m}$, where $m$ counts the words $(2e_i,2e_i)$ in the code.
The current algebra is therefore $\widehat{\su}(2)_1^{3m}\oplus\widehat{\mathfrak u}(1)^{3d-3m}$,
and no member of the family has an $\su(4)$ current algebra. This separates the split family
from the level-one $\mathrm{SU}(4)$ WZW models of \cite{Ang25}, whose discriminant form
$3(a\cdot a'-b\cdot b')$ is not isometric to $\beta$ over $\Z_4$. Second, $M_2$ is the only
two-pair class with no T-duality frame of zero $B$-field. Its positive lattice is
$D_{12}^+$, and the irreducible $D_{12}$ root system spanning all twelve directions forbids
any orthogonal splitting $G^{-1}\oplus G$. In the frame of \cite{A3math} its $B$-field is
half-integral and cannot be made integral. Third, the finite orthogonal group of
$(\Z_4^4,q)$, which includes fractional $B$-field shifts, merges the seven classes into only
three orbits, labeled by the group type of the code, free, mixed and torsion. These finite orbits are therefore not
T-duality classes. Finally, any two gluing codes $C,C'$ are related by a shift orbifold with
gauge group $C'/(C\cap C')$ \cite{IMP26}. The target code fixes the twisted-sector
projections, for example, gauging
$(-1)^{2\Delta}$ in the torsion class $T$ produces the second mixed class $M_2$.

In this paper we develop the physical interpretation of this family of CFTs. Our results can be summarized as follows:
\begin{enumerate}[label=(\roman*)]
    \item First, we identify the corresponding bulk theory. Every code lattice contains the zero-code lattice $R^d\oplus R^d$, whose discriminant form is the split module $(\Z_4^{2d},\tfrac34\,a\cdot b)$. This defines a single abelian Chern-Simons theory $\mathcal Z_d$, equivalent to an untwisted $\Z_4^d$ gauge theory. Its topological boundary conditions are exactly the even self-dual gluing codes, so all twenty-two theories at $d=2$ are boundaries of the same bulk, in the sense of the SymTFT construction \cite{KS11,GK21,KOZ22,FMT22}. Upon considering invertible bulk symmetries in the form of block-monomial moves, which are T-dualities, and fractional $B$-field shifts, which describe jumps between different CFTs, we organize the twenty-two boundaries into three orbits, one for each group type of the code. Since an invertible symmetry cannot change the group type, the free, mixed and torsion theories can only be connected by gaugings.
    \item Second, we construct the complete web of these gaugings. The SymTFT picture implies that any two boundaries of $\mathcal Z_2$ are related by gauging a finite symmetry \cite{KS11,GK21,AKN26,IMP26}. We derive the order and structure of each gauge group from the reductions of the two codes modulo two, and determine every connection by a group of order at most four, $\Z_2$, $\Z_4$ or $\Z_2\times\Z_2$. Three pairs of classes are joined only by gaugings of order eight, and five of the seven classes are self-dual under some gauging. For each connection we give the twisted partition functions and follow the light spectrum and the $\widehat{\su}(2)_1$ currents. For example, the theory $M_2$, with the largest gap, is obtained from $T$ by gauging $(-1)^{2\Delta}$, which removes all states of half-integral dimension while keeping the $\widehat{\su}(2)_1^6$ currents.
    \item Third, we study the sum of $\mathcal Z_d$ over handlebodies. For prime or square-free $N$ this sum gives the uniform average over all code theories \cite{ADS23,Dymarsky,DHM25}. For non-square-free $N$ it is known that non-handlebody topologies must be added \cite{Dymarsky,Angelinos:2025tqft}, and the resulting ensembles have been computed for odd prime powers \cite{Angelinos:2025tqft}. We extend these results to the even case $\Z_4$, for any number of pairs, at genus one. We prove a finite Siegel-Weil identity, valid for every $d$, that expresses the handlebody sum in terms of the free codes and the torsion code. For $d=1$ it reproduces the result of \cite{Dymarsky}. For $d=2$ the torsion coefficient vanishes, and the handlebody sum is the uniform average over the twelve free theories. The uniform average over all twenty-two theories is recovered by adding a solid torus with a condensation defect, whose state is the torsion code vector. This gives $Z_{\rm all}=\tfrac9{11}Z_{\rm free}+\tfrac2{11}Z_T$, in which the mixed theories enter through a linear relation among the genus-one code vectors. For $d\geq3$ the handlebody vector has negative components, so it admits no expansion as a nonnegative combination of code boundary states.
\end{enumerate}

The paper is organized as follows. In section~\ref{sec:setup} we review the $A_3$ family and
its seven T-duality classes. In section~\ref{sec:bulk} we identify the bulk Chern-Simons
theory and show how its invertible symmetries organize the boundary conditions. In
section~\ref{sec:web} we construct the orbifold web between the seven classes and analyze how
each gauging acts on the spectrum. In section~\ref{sec:ensemble} we compute the handlebody sum,
prove the finite Siegel-Weil identity, and determine the correction needed to recover the full
uniform ensemble. We conclude in section~\ref{sec:disc} with a discussion of our results and possible future directions. In the two appendices, Appendix~\ref{app:web-proofs} and Appendix~\ref{app:handlebody}, we collect the proofs to avoid cluttering the main text.

\section{The $A_3$ Narain family over $\Z_4$}\label{sec:setup}

We recall the construction of~\cite{A3math}. Let $R=A_3\simeq D_3$ be the root lattice and $W=R^*$ its weight
lattice~\cite{CS99}. The quotient $W/R\simeq\Z_4$ has four cosets $W_a$, $a\in\Z_4$, and we write
$[x]=a$ when $x\in W_a$. Inner products of weights are fixed modulo one by their labels,
\begin{equation}\label{eq:disc-pairing}
  x\cdot y\equiv\tfrac34\,ab \pmod 1,\qquad x\in W_a,\ y\in W_b .
\end{equation}
Now, we take $d$ pairs of $A_3$ blocks. The labels form the group
$M_d=\Z_4^d\oplus\Z_4^d$, whose elements codewords $c=(a,b)=(a_1,\dots,a_d,b_1,\dots,b_d)$.
For $d=2$ this is $(a_1,a_2,b_1,b_2)$, often abbreviated $a_1a_2b_1b_2$. Following~\eqref{eq:disc-pairing}, we equip $M_d$ with the integer-valued forms
\begin{equation}\label{eq:beta-q}
  \beta\big((a,b),(a',b')\big)=a\cdot b'+b\cdot a' \pmod 4,\qquad q(a,b)=a\cdot b\pmod 4 .
\end{equation}
A code is thus a subgroup of $M_d$,  $C\subset M_d$, it is even self-dual when $q=0$ and $C=C^\perp$ with respect
to $\beta$. It selects which cosets are allowed in the momentum lattice,
\begin{equation}\label{eq:Lambda-C}
  \Lam(C)=\big\{(x,y)\in W^d\oplus W^d:\ ([x_1],\dots,[x_d],[y_1],\dots,[y_d])\in C\big\}.
\end{equation}
On $\Lam(C)$, we use the Lorentzian form and the energy form, given respectively by:
\begin{equation}
    \eta(z,z')=x\cdot y'+y\cdot x' \hspace{0.5cm},  \hspace{0.5cm}
    \mathcal H(z,z')=x\cdot x'+y\cdot y'
\end{equation}
for $z=(x,y)$ and $z'=(x',y')$. The left and right momenta are
$p_{L,R}=(x\pm y)/\sqrt2$, and a momentum primary with lattice vector $z$ has
\begin{equation}
  h-\bar h=\tfrac12\eta(z,z)=x\cdot y,\qquad \Delta=h+\bar h=\tfrac12 \mathcal H(z,z).
\end{equation}
By~\eqref{eq:disc-pairing}, the Lorentzian form reduces to the code forms such that
\[
  \eta(z,z')\equiv\tfrac34\,\beta(c,c'),\qquad \tfrac12\eta(z,z)\equiv\tfrac34\,q(c)\pmod 1 .
\]
Therefore, $\Lam(C)$ is even unimodular for $\eta$, and defines a consistent CFT with $c=\bar c=3d$, exactly
when $C$ is even self-dual~\cite{A3math}.

The torus partition function is built from the block theta functions
$\Theta_{ab}$ of~\cite{A3math}, one for each pair of labels. For a function
$f:M_d\to\C$ on labels we write

\begin{equation}\label{eq:Theta-f}
  \Theta[f]=\sum_{c\in M_d}f(c)\prod_{i=1}^d\Theta_{a_ib_i},\qquad
  Z_C=\frac{\Theta[\mathbf 1_C]}{|\eta(\tau)|^{6d}},
\end{equation}
where $\mathbf 1_C$ is the indicator function of $C$.

certain codes give the same theory and are related by transformations like the block-monomial moves. These are
generated by the permutation $\sigma$ of the two hyperbolic pairs, the exchanges
$\varepsilon_i:a_i\leftrightarrow b_i$ and the sign changes $\nu_i:(a_i,b_i)\mapsto(-a_i,-b_i)$, which induce T-dualities. For $d=2$ there are twenty-two even self-dual codes, which fall into seven
block-monomial classes~\cite{A3math}. Their theta series are all different, so the seven
classes are seven distinct Narain theories. They are listed in
table~\ref{tab:seven}. 
\begin{table}[]
\centering
\renewcommand{\arraystretch}{1.15}
\begin{tabular}{llcccc}
\hline
Class & representative $C_X$ & group type & orbit size & gap & chiral roots \\
\hline
$\mathrm F_0$ & $\langle 0001,\ 0010\rangle$               & $\mathbb Z_4^2$                 & 4 & $3/8$ & --- \\
$\mathrm F_1$ & $\langle 0011,\ 1300\rangle$               & $\mathbb Z_4^2$                 & 4 & $3/4$ & --- \\
$\mathrm F_2$ & $\langle 0012,\ 2100\rangle$               & $\mathbb Z_4^2$                 & 4 & $1/2$ & --- \\
$\mathrm M_0$ & $\langle 0001,\ 0020,\ 2000\rangle$        & $\mathbb Z_4\oplus\mathbb Z_2^2$ & 4 & $3/8$ & $A_1^3$ \\
$\mathrm M_1$ & $\langle 0002,\ 0011,\ 2200\rangle$        & $\mathbb Z_4\oplus\mathbb Z_2^2$ & 4 & $1/2$ & --- \\
$\mathrm M_2$ & $\langle 0022,\ 0202,\ 1113\rangle$        & $\mathbb Z_4\oplus\mathbb Z_2^2$ & 1 & $1$   & $A_1^6$ \\
$\mathrm T$   & $\langle 0002,\ 0020,\ 0200,\ 2000\rangle$ & $\mathbb Z_2^4$                 & 1 & $1/2$ & $A_1^6$ \\
\hline
\end{tabular}
\caption{The seven block-monomial classes at $c=\bar c=6$~\cite{A3math}.
The second column gives the representative code $C_X$ of each class, with
labels ordered as $(a_1,a_2,b_1,b_2)$. The orbit size is the
number of labeled codes in the class. The gap is the lowest nonzero
$\Delta$ of a momentum primary. The last column gives the chiral root system.}
\label{tab:seven}
\end{table}
\section{One bulk, twenty-two boundaries}\label{sec:bulk}
In this section we identify the 3D bulk theory whose topological boundary conditions
are the code theories of section~\ref{sec:setup}, and we determine its symmetries. The construction
uses only the lattice $\Lam_0=R^d\oplus R^d$ of the zero code, which is contained in every code lattice
$\Lam(C)$. Its discriminant group with respect to $\eta$ is the group $M_d$ of code labels, and its
discriminant form defines an abelian Chern-Simons theory $\mathcal Z_d$ ; this defines a symmetry topological field theory in the sense of \cite{GK21,KOZ22,FMT22}. We show that the topological
boundary conditions of $\mathcal Z_d$ are labeled by the even self-dual codes, so that every code theory is obtained from the same bulk by a choice of boundary. We then turn to the invertible symmetries of $\mathcal Z_d$, which act on its boundaries. For $d=2$ they
arrange the twenty-two boundaries into three orbits, one for each group type of the code.
\subsection{The bulk TQFT}

Every code lattice contains the associated lattice of the zero case $\Lam_0=R^d\oplus R^d$. Particularly, with respect to
$\eta$, we can write
\begin{equation}\label{eq:disc-form}
  \Lam_0^{*}/\Lam_0 = (W^d\oplus W^d) / (R^d\oplus R^d) \;\simeq\; M_d,
  \qquad
  \mathsf q(a,b)=\tfrac12\eta(z,z)\bmod 1 = \tfrac34\, a\cdot b \bmod 1 ,
\end{equation}
where $z$ is any lift of $(a,b)\in M_d$. The pair $(M_d,\mathsf q)$ is the anyon data of the abelian
Chern-Simons theory $\mathcal Z_d$ whose $K$-matrix is the $\eta$-Gram matrix of $\Lam_0$,
\begin{equation}\label{eq:bulk}
  S_{\mathcal Z_d}=\frac{1}{4\pi}\int K_{IJ}\,A^I\wedge dA^J,\qquad
  K=\begin{pmatrix}0 & K_{A_3}\otimes I_d\\ K_{A_3}\otimes I_d & 0\end{pmatrix},
\end{equation}
with $K_{A_3}$ the Cartan matrix of $A_3$. Setting $A=(a,b)$, the Chern-Simons theory can be equivalently expressed as the BF theory given by $\frac1{2\pi}\int b^{\mathsf T}(K_{A_3}\otimes I_d)\,da$.
Its lines are labeled by $M_d$, which has order $|\det K|=4^{2d}$, and the spin of the line $(a,b)$
is $\mathsf q(a,b)$. Since $K$ has signature $(3d,3d)$, the chiral central charge vanishes and
$\mathcal Z_d$ admits gapped boundary conditions.

The relative theory with momentum lattice $\Lam_0$ has the $4^{2d}$ partition vectors
$\Theta_{(a,b)}/|\eta(\tau)|^{6d}$, where $\Theta_{(a,b)}=\prod_{i=1}^d\Theta_{a_ib_i}$ is a product of
the block theta functions of~\cite{A3math}. It lives on one boundary of the slab
$\mathcal Z_d\times[0,1]$. A topological boundary condition on the other side is specified by a
Lagrangian subgroup $L\subset M_d$, that is, $\mathsf q|_L=0$ and $|L|^2=|M_d|$~\cite{KS11}. Since three
is invertible modulo four, $\mathsf q|_L=0$ is the evenness condition $a\cdot b\equiv0\pmod4$ on $L$.
By polarization an even subgroup is isotropic, and an isotropic subgroup of order $4^d$ is
self-dual~\cite{A3math}. Hence the Lagrangian subgroups of $(M_d,\mathsf q)$ are exactly the
even self-dual codes. Shrinking the slab with boundary $L$ projects the partition vector onto the labels
in $L$ and produces the absolute theory with momentum lattice $\Lam(L)$. This is the SymTFT, or sandwich,
picture~\cite{GK21,KOZ22,FMT22} specialized to the present family. For $d=2$, the classification
of \cite{A3math} therefore gives exactly twenty-two topological boundary conditions of
$\mathcal Z_2$, forming the seven classes of table~\ref{tab:seven}.

\subsection{Bulk symmetries and dualities}\label{sec:bulk-sym}
 
The lines of $\mathcal Z_d$ are labelled by $M_d$, and a topological boundary $L$ is the set of lines
that can end on it. An invertible symmetry of $\mathcal Z_d$ relabels the lines without changing
their spins and braiding, so it is an element $g$ of the isometry group $\Ot(M_d,\mathsf q)$. Acting
on a boundary, it changes which lines end there, $L\mapsto gL$. In the present family the bulk
symmetries come in two kinds, distinguished by what they do to the Narain background.
 
\paragraph{T-dualities:} Permuting the hyperbolic pairs, exchanging $a_i\leftrightarrow b_i$ and
reversing the sign of a pair are the block-monomial operations of~\cite{A3math}. They are induced by
$\Ot(6d,6d;\Z)$ transformations that preserve both $\eta$ and $\mathcal H$. The theories with boundaries
$L$ and $gL$ are then the same CFT.
 
\paragraph{Fractional $B$-field shifts:} Shifting the $B$-field by an integral antisymmetric matrix is
also a T-duality. Shifting it by a fractional amount is not: it moves the theory to a different point of
the Narain moduli space. In the present family the relevant shift is $B\mapsto B+sJ\otimes K_{A_3}^{-1}$,
with $s\in\Z_4$ and $J=\left(\begin{smallmatrix}0&1\\-1&0\end{smallmatrix}\right)$. On the code labels it acts as
\begin{equation}\label{eq:B-shift}
  h_s:(a,b)\longmapsto(a,\,b+sJa)\qquad\text{on }M_2 .
\end{equation}
This is the familiar effect of a $B$-field. In the null coordinates of~\cite{A3math}, where
$a$ labels windings and $b$ momenta, a state with winding label $a$ acquires an additional momentum
label $sJa$. Because $J$ is antisymmetric, $a\cdot Ja=0$, so the spin $h-\bar h$, which is
the product of momentum and winding, does not change. The map $h_s$ is therefore a symmetry of the
bulk. The scaling dimensions $\Delta$ do change, because $h_s$ does not preserve $\mathcal H$.
 
To illustrate this, let us consider the free classes. By~\cite{A3math}, the representatives
$C_0\in F_0$, $C_1\in F_1$ and $C_2\in F_2$ have the same metric $G=I_2\otimes K_{A_3}^{-1}$ and the
$B$-fields $B=bJ\otimes K_{A_3}^{-1}$ with $b=0,1,2$. The three free classes thus correspond to same
torus at three values of the $B$-field, and $h_s$ maps $C_b$ to $C_{b+s}$. Their gaps however, $3/8$, $3/4$ and
$1/2$, show that these are three different CFTs.
 
Combining the two kinds of symmetries organizes all boundaries. Particularly, under $\Ot(M_2,\mathsf q)$ the twenty-two boundaries of $\mathcal Z_2$ form exactly three orbits,
\[
  \{F_0,F_1,F_2\}\ (12\ \text{codes}),\qquad \{M_0,M_1,M_2\}\ (9\ \text{codes}),\qquad \{T\}\ (1\ \text{code}).
\]
 
The orbits are labeled by the group type of the code, $\Z_4^2$, $\Z_4\oplus\Z_2^2$ or $\Z_2^4$. We can understand this as how many of the lines ending on the boundary have order four
and how many have order two. A relabeling of lines cannot change it. Within each type, T-dualities
and the $B$-field shifts~\eqref{eq:B-shift} connect all codes.

\section{The seven theories and the orbifolding web}\label{sec:web}
 
In section~\ref{sec:bulk} we saw that the $22$ code theories are the topological boundary conditions of a single bulk theory $\mathcal Z_2$. Two boundaries of the same abelian Chern-Simons theory are always related by gauging a finite symmetry of either boundary theory~\cite{KS11,GK21}. Gaugings of $\Z_N$ symmetries of Narain and code CFTs were studied in~\cite{AKN26}, with orbifold partition functions written through weight enumerators. Here we make this explicit for the $A_3$ family. We derive the orbifold partition functions, determine exactly which pairs of the seven classes are related by gauging a group of order two or four, and describe what each gauging does to the spectrum. The mathematical proofs are collected in appendix~\ref{app:web-proofs}.

\subsection{Gauging between boundaries}\label{sec:gauging}
 
The bulk lines with labels in $C$ end on the boundary $C$. The remaining bulk lines, pushed onto the boundary and taken modulo $C$, become topological lines of $\Lambda(C)$. They generate the finite abelian symmetry $M_d/C$ of $\Lambda(C)$. A label $w\in M_d$ acts on a vertex operator $V_z$ of label $c$ by the phase
\begin{equation}\label{eq:character}
  \chi_w(c)=e^{2\pi i\,\frac34\beta(c,w)}=e^{2\pi i\,\eta(z,\hat w)} ,
\end{equation}
where $\hat w$ is any lift of $w$ to $W^d\oplus W^d$. So $\chi_w$ is the shift symmetry by $\hat w$~\cite{IMP26}. It depends only on $w$ modulo $C$, since $\beta(c,C)=0$ for $c\in C$. The second equality follows from $x\cdot y\equiv 3ab/4 \pmod 1$ for $x\in W_a$, $y\in W_b$.
 
Let $C'$ be a second boundary and put
\begin{equation}\label{eq:I-H}
  I=C\cap C',\qquad H=C'/I .
\end{equation}
The pairing $\beta$ identifies $C/I$ with the dual of $C'/I$, so the two groups are isomorphic and the relation below is symmetric in $C$ and $C'$. For $w,w'\in C'$ we define the partition function with a $w$-twist along space and a $w'$-insertion along time,
\begin{equation}\label{eq:twisted-Z}
  Z_C[w,w']=\frac{1}{|\eta(\tau)|^{6d}}\sum_{c\in C+w}\chi_{w'}(c)\prod_{i=1}^d\Theta_{a_ib_i}.
\end{equation}
It depends only on the classes of $w,w'$ in $H$. Because $C'$ is even and isotropic, these twisted partition functions transform covariantly (appendix~\ref{app:modular}):
\begin{equation}\label{eq:twisted-modular}
  Z_C[w,w'](\tau+1)=Z_C[w,w'+w],\qquad Z_C[w,w'](-1/\tau)=Z_C[w',-w].
\end{equation}
Evenness, $q(w)=0$, gives the spin selection rule in the twisted sectors: the spins of the $w$-twisted states are fixed modulo one by their $H$-charge. Isotropy, $\beta(w,w')=0$, is the absence of a mixed 't~Hooft anomaly. Together, evenness and isotropy of $C'$ say that $H$ is non-anomalous. The orbifold by $H$ is then
\begin{equation}\label{eq:orbifold-sum}
  \frac{1}{|H|}\sum_{w,w'\in H}Z_C[w,w']=Z_{C'} .
\end{equation}
Projecting onto $H$-invariant states keeps exactly the labels $I+w$ in the $w$-twisted sector, and these cosets assemble into $C'$. Thus the untwisted invariant states form $\Lambda(I)$, the $w$-twisted states form $\Lambda(I+w)$, and the orbifold is $\Lambda(C')$.
 
\subsection{An example: $M_2=T/\mathbb Z_2$}\label{sec:example}
 
The code $T=2\mathbb Z_4^4$ consists of the $16$ words with entries in $\{0,2\}$. In $\Lambda(T)$ every block lies in $W_0\cup W_2=\mathbb Z^3$, so $\Lambda(T)$ is $\mathbb Z^{12}$ with its standard positive form. The words of $C_{M_2}$ are $(k,k+2b,k+2a,3k+2a+2b)$ with $k\in\mathbb Z_4$ and $a,b\in\mathbb Z_2$. Those with $k$ even are twice the binary even-weight words; they form $I=T\cap C_{M_2}$, of order eight. Hence $H\simeq\mathbb Z_2$, generated by $w=1113$.
 
On $T$ one has $\beta(c,1113)=a_1+3a_2+b_1+b_2$, so $\chi_w(c)=(-1)^{N_2(c)}$, with $N_2(c)$ the number of entries equal to two. A block of $\mathbb Z^3$ has label two exactly when its coordinate sum is odd. Therefore
\begin{equation}\label{eq:minus-2Delta}
  \chi_w(z)=(-1)^{z_1+\dots+z_{12}}=(-1)^{2\Delta(z)} ,
\end{equation}
and, since oscillators shift $\Delta$ by integers, $w$ acts on all states as $(-1)^{2\Delta}$. Gauging it removes the states of half-integral dimension. The invariant sector is $D_{12}$, the twisted sector is $(\mathbb Z+\tfrac12)^{12}$, and the projected twisted sector is one spinor coset of $D_{12}$. Thus $\Lambda(C_{M_2})=D_{12}^+=D_{12}\cup(s+D_{12})$, with $s=\tfrac12(1,\dots,1,-1)$, in agreement with~\cite{A3math}.
 
Pairing coordinates as $(a_1,b_1)$, $(a_2,b_2)$ and using $\Theta_{ab}=\Theta_{ba}=\Theta_{-a,-b}$, put
\begin{equation}
  P=\Theta_{00}+2\Theta_{02}+\Theta_{22},\quad Q=\Theta_{00}-2\Theta_{02}+\Theta_{22},\quad
  R=2(\Theta_{11}+\Theta_{13}),\quad S=2(\Theta_{11}-\Theta_{13}).
\end{equation}
The four sectors are
\begin{equation}\label{eq:T-M2-sectors}
  |\eta|^{12}Z_T[0,0]=P^2,\quad |\eta|^{12}Z_T[0,w]=Q^2,\quad |\eta|^{12}Z_T[w,0]=R^2,\quad |\eta|^{12}Z_T[w,w]=-S^2 .
\end{equation}
Under $\tau\to\tau+1$, $\Theta_{11}\to-i\Theta_{11}$ and $\Theta_{13}\to i\Theta_{13}$, so $R^2\to-S^2$, in agreement with \eqref{eq:twisted-modular}. The orbifold sum gives
\begin{equation}\label{eq:Z-M2}
  Z_{M_2}=\frac{P^2+Q^2+R^2-S^2}{2|\eta|^{12}}=\frac{(\Theta_{00}+\Theta_{22})^2+4\Theta_{02}^2+8\,\Theta_{11}\Theta_{13}}{|\eta|^{12}} .
\end{equation}
The last term is the twisted sector. Its lightest states have $\Delta=3/2$. Physically, the gauging keeps the $\widehat{\su}(2)_1^6$ current algebra of $T$, which is neutral under $(-1)^{2\Delta}$, and raises the gap from $1/2$ to $1$: the $24$ lightest states of $T$ are odd and are projected out, and the twisted sector starts at $\Delta=3/2$.
 
\subsection{The complete web}\label{sec:web-result}
 
We now determine every gauging by a group of order at most four. The order and type of $C/(C\cap C')$ are fixed by the reductions of $C$ and $C'$ modulo two. If these reductions span subspaces $W_C,W_{C'}\subset\mathbb F_2^4$ of dimensions $\omega,\omega'$ that meet in dimension $k\le1$, then
\begin{equation}\label{eq:order-formula}
  |C/(C\cap C')|=2^{\,\omega+\omega'-2k}.
\end{equation}
An order-four quotient is $\mathbb Z_2\times\mathbb Z_2$ when $W_C$ and $W_{C'}$ are orthogonal, and $\mathbb Z_4$ otherwise. Distinct codes over the same plane ($k=2$) are always related by $\mathbb Z_2\times\mathbb Z_2$. These statements are proved in appendix~\ref{app:reduction}. Running through the reductions of the seven classes (appendix~\ref{app:web-proof}) gives the web of table~\ref{tab:web} and figure~\ref{fig:web}.
 
\begin{table}[t]
\centering
\small
\renewcommand{\arraystretch}{1.25}
\begin{tabular}{@{}lll@{}}
\toprule
Gauge group & Connections between distinct classes & Self-connections \\
\midrule
$\mathbb Z_2$ & $F_0M_0,\ F_0M_1,\ F_1M_1,\ F_1M_2,\ F_2M_0,\ F_2M_1,\ M_0T,\ M_1T,\ M_2T$ & --- \\
$\mathbb Z_4$ & $F_0F_1,\ F_0F_2,\ F_1F_2,\ M_0M_1,\ M_0M_2$ & $F_0,\ F_1,\ F_2,\ M_0,\ M_1$ \\
$\mathbb Z_2\times\mathbb Z_2$ & $F_0F_2,\ F_0T,\ F_1T,\ F_2T,\ M_0M_1,\ M_1M_2$ & $F_1,\ M_0,\ M_1$ \\
\midrule
order $8$ only & $F_1M_0,\ F_0M_2,\ F_2M_2$ & \\
\bottomrule
\end{tabular}
\caption{The orbifolding web. A connection $XY$ means that some $C\in X$ and $C'\in Y$ are related by gauging the group of that row. A self-connection means the same with $X=Y$ and $C\neq C'$.}
\label{tab:web}
\end{table}
 
\begin{figure}[t]
\centering
\begin{tikzpicture}[scale=0.92,
  cls/.style={circle,draw,thick,minimum size=9mm,inner sep=0pt,fill=white},
  z2/.style={thick,blue!80!black},
  z4/.style={thick,red!80!black,dash pattern=on 4pt off 2.5pt},
  zz/.style={line width=1.4pt,green!50!black,dash pattern=on 1pt off 2pt},
  selfloop/.style={looseness=6}]
  \coordinate (pT) at (0,5.4);
  \node[cls] (T)  at (0,5.4) {$T$};
  \node[cls] (M0) at (-4.4,2.8) {$M_0$};
  \node[cls] (M1) at (0,2.8)  {$M_1$};
  \node[cls] (M2) at (4.4,2.8) {$M_2$};
  \node[cls] (F0) at (-4.4,0) {$F_0$};
  \node[cls] (F2) at (0,0)    {$F_2$};
  \node[cls] (F1) at (4.4,0)  {$F_1$};
  \draw[z2] (F0)--(M0); \draw[z2] (F0)--(M1); \draw[z2] (F2)--(M0);
  \draw[z2] (F2)--(M1); \draw[z2] (F1)--(M1); \draw[z2] (F1)--(M2);
  \draw[z2] (M0)--(T);  \draw[z2] (M1)--(T);  \draw[z2] (M2)--(T);
  \draw[z4] (F0) to[bend right=10] (F2);
  \draw[z4] (F2)--(F1);
  \draw[z4] (F0) to[bend right=38] (F1);
  \draw[z4] (M0) to[bend right=10] (M1);
  \draw[z4] (M0) to[bend left=38] (M2);
  \draw[zz] (F0) to[bend left=10] (F2);
  \draw[zz] (M0) to[bend left=10] (M1);
  \draw[zz] (M1)--(M2);
  \draw[zz] (F0) to[out=160,in=180,looseness=1.25] (T);
  \draw[zz] (F1) to[out=20,in=0,looseness=1.25] (T);
  \draw[zz] (F2) to[bend right=30] (T);
  \draw[z4] (F0) to[out=195,in=245,selfloop] (F0);
  \draw[z4] (F2) to[out=250,in=290,selfloop] (F2);
  \draw[z4] (F1) to[out=310,in=360,selfloop] (F1);
  \draw[zz] (F1) to[out=255,in=300,selfloop] (F1);
  \draw[z4] (M0) to[out=150,in=200,selfloop] (M0);
  \draw[zz] (M0) to[out=205,in=255,selfloop] (M0);
  \draw[z4] (M1) to[out=60,in=110,selfloop] (M1);
  \draw[zz] (M1) to[out=125,in=165,selfloop] (M1);
  \begin{scope}[shift={(-3.6,-2.6)}]
    \draw[z2] (0,0)--(0.9,0) node[right,black] {$\mathbb Z_2$};
    \draw[z4] (2.4,0)--(3.3,0) node[right,black] {$\mathbb Z_4$};
    \draw[zz] (4.8,0)--(5.7,0) node[right,black] {$\mathbb Z_2\times\mathbb Z_2$};
  \end{scope}
\end{tikzpicture}
 
\caption{The orbifolding web of table~\ref{tab:web}. Loops are self-dualities under gauging.}
\label{fig:web}
\end{figure}
 
Three features stand out. First, the web is highly connected: the $\mathbb Z_2$ edges alone already connect all seven classes, and $T$ is one $\mathbb Z_2$ or $\mathbb Z_2\times\mathbb Z_2$ gauging away from every other class. Second, the pairs $(F_1,M_0)$, $(F_0,M_2)$ and $(F_2,M_2)$ are not joined by any gauging of order at most four; for them every pair of codes has a quotient of order eight. Third, five of the seven classes are self-dual under some gauging, which we discuss in the next section.
 
\subsection{What gauging does to the spectrum}\label{sec:spectral}
 
To investigate the physical implications of these gaugings, we follow the light states along each connection of the web and record the following properties: the gap of the two theories, how many of the lightest states of $\Lambda(C)$ are charged under $H$, the lightest states of the invariant untwisted sector $\Lambda(I)$ and of the twisted sectors $\Lambda(C'\setminus I)$, and the $\widehat{\su}(2)_1^3$ current factors. Each such factor comes from one of the words $2020$, $0202$ \cite{A3math}. 

Lets us explicitly work out the orbifolding $\mathrm{F}_0\to\mathrm{M}_0$.
Take $C = C_{\mathrm{F}_0} = \langle 0001, 0010\rangle$, whose words are
$(0,0,s,t)$ with $s,t\in\mathbb Z_4$, and
$C' = C_{\mathrm{M}_0} = \langle 0001, 0020, 2000\rangle$, whose words are
$(2a,0,2b,c)$ with $a,b\in\mathbb Z_2$ and $c\in\mathbb Z_4$.
Their intersection $I = \langle 0001, 0020\rangle$ consists of the eight
words $(0,0,2b,c)$, so $H = C'/I \simeq \mathbb Z_2$, generated by
$w = 2000$. For $c=(0,0,s,t)\in C$ one has $\beta(c,w) = 2s$, the symmetry acts as $\chi_w(c) = (-1)^{s}$.

To count states we use the one-variable series of
\cite{A3math}: momentum primaries with labels
$c = (c_1,\dots,c_4)$ contribute $\prod_j \vartheta_{c_j}(t)$, where $t$
records $\Delta$ and
\begin{equation}
  \vartheta_0 = 1 + 12t + \dots, \qquad
  \vartheta_2 = 6t^{1/2} + \dots, \qquad
  \vartheta_1 = \vartheta_3 = 4t^{3/8} + \dots .
\end{equation}
The gaps $3/8$ of $\mathrm{F}_0$ and $\mathrm{M}_0$ are read
off from \cite{A3math}. The states of $\Lambda(C)$ at
$\Delta = 3/8$ come from the words with one odd entry and all others
zero, namely $(0,0,s,0)$ and $(0,0,0,t)$ with $s,t$ odd. Each contributes
$\vartheta_0^3\vartheta_1 = 4t^{3/8} + \dots$, giving $16$ states. Those
with $s$ odd are charged, so $8$ of the $16$ are projected out.

Summing over the words of $I$,
\begin{equation}
  \Theta_{\Lambda(I)} = \vartheta_0^2(\vartheta_0+\vartheta_2)
  (\vartheta_0+\vartheta_2+2\vartheta_1)
  = 1 + 8t^{3/8} + \dots ,
\end{equation}
so the lightest neutral states have $\Delta = 3/8$, with degeneracy $8$.
They are the states with labels $(0,0,0,t)$, $t$ odd. These define the untwisted sector. As for the twisted sector states, the words of $C'\setminus I$ are $(2,0,2b,c)$, and
\begin{equation}
  \Theta_{\Lambda(C'\setminus I)} = \vartheta_2\vartheta_0
  (\vartheta_0+\vartheta_2)(\vartheta_0+\vartheta_2+2\vartheta_1)
  = 6t^{1/2} + \dots ,
\end{equation}
so the lightest twisted states have $\Delta = 1/2$, with degeneracy $6$.
As a check, the two sectors add up to
$\Theta_{\mathrm{M}_0} = \vartheta_0(\vartheta_0+\vartheta_2)^2
(\vartheta_0+\vartheta_2+2\vartheta_1) = 1 + 8t^{3/8} + 18t^{1/2} + \dots$,
in agreement with \cite{A3math}.

And since neither $2020$ nor $0202$ lies in $C$, $\Lambda(C)$ has
no $\widehat{su}(2)_1^3$ factor. The word $2020$ lies in $C'$ but not in
$I$, since its first entry is $2$, while $0202\notin C'$. Hence
$\Lambda(C')$ has one $\widehat{su}(2)_1^3$ factor, and it comes from the
twisted sector: $0\to 1$~(t).
 
\begin{table}[htbp]
\centering
\small
\renewcommand{\arraystretch}{1.18}
\resizebox{\textwidth}{!}{%
\begin{tabular}{@{}llccccl@{}}
\toprule
$C\to C'$ & $H$ & gap: $C\to C'$ & charged at gap of $C$ & lightest untwisted & lightest twisted & $\widehat{\su}(2)_1^3$ factors \\
\midrule
$F_0\to M_0$ & $\mathbb Z_2$ & $3/8\to3/8$ & $8/16$ & $3/8\ (8)$ & $1/2\ (6)$ & $0\to1$ (t) \\
$F_0\to M_1$ & $\mathbb Z_2$ & $3/8\to1/2$ & $16/16$ & $1/2\ (12)$ & $1\ (36)$ & $0\to0$ \\
$F_1\to M_1$ & $\mathbb Z_2$ & $3/4\to1/2$ & $32/64$ & $3/4\ (32)$ & $1/2\ (12)$ & $0\to0$ \\
$F_1\to M_2$ & $\mathbb Z_2$ & $3/4\to1$ & $64/64$ & $1\ (120)$ & $1\ (144)$ & $0\to2$ (t) \\
$F_2\to M_0$ & $\mathbb Z_2$ & $1/2\to3/8$ & $0/12$ & $1/2\ (12)$ & $3/8\ (8)$ & $0\to1$ (t) \\
$F_2\to M_1$ & $\mathbb Z_2$ & $1/2\to1/2$ & $0/12$ & $1/2\ (12)$ & $3/4\ (64)$ & $0\to0$ \\
$M_0\to T$ & $\mathbb Z_2$ & $3/8\to1/2$ & $8/8$ & $1/2\ (18)$ & $1/2\ (6)$ & $1\to2$ (u+t) \\
$M_1\to T$ & $\mathbb Z_2$ & $1/2\to1/2$ & $0/12$ & $1/2\ (12)$ & $1/2\ (12)$ & $0\to2$ (t) \\
$M_2\to T$ & $\mathbb Z_2$ & $1\to1/2$ & $0/264$ & $1\ (264)$ & $1/2\ (24)$ & $2\to2$ (u) \\
\midrule
$F_0\to F_1$ & $\mathbb Z_4$ & $3/8\to3/4$ & $16/16$ & $3/4\ (32)$ & $3/4\ (32)$ & $0\to0$ \\
$F_0\to F_2$ & $\mathbb Z_4$ & $3/8\to1/2$ & $16/16$ & $1/2\ (6)$ & $1/2\ (6)$ & $0\to0$ \\
$F_1\to F_2$ & $\mathbb Z_4$ & $3/4\to1/2$ & $64/64$ & $1\ (84)$ & $1/2\ (12)$ & $0\to0$ \\
$M_0\to M_1$ & $\mathbb Z_4$ & $3/8\to1/2$ & $8/8$ & $1/2\ (6)$ & $1/2\ (6)$ & $1\to0$ \\
$M_0\to M_2$ & $\mathbb Z_4$ & $3/8\to1$ & $8/8$ & $1\ (156)$ & $1\ (108)$ & $1\to2$ (u+t) \\
\midrule
$F_0\to F_2$ & $\mathbb Z_2^2$ & $3/8\to1/2$ & $16/16$ & $1/2\ (12)$ & $7/8\ (96)$ & $0\to0$ \\
$F_0\to T$ & $\mathbb Z_2^2$ & $3/8\to1/2$ & $16/16$ & $1/2\ (12)$ & $1/2\ (12)$ & $0\to2$ (t) \\
$F_1\to T$ & $\mathbb Z_2^2$ & $3/4\to1/2$ & $64/64$ & $1\ (120)$ & $1/2\ (24)$ & $0\to2$ (t) \\
$F_2\to T$ & $\mathbb Z_2^2$ & $1/2\to1/2$ & $0/12$ & $1/2\ (12)$ & $1/2\ (12)$ & $0\to2$ (t) \\
$M_0\to M_1$ & $\mathbb Z_2^2$ & $3/8\to1/2$ & $8/8$ & $1/2\ (12)$ & $3/4\ (64)$ & $1\to0$ \\
$M_1\to M_2$ & $\mathbb Z_2^2$ & $1/2\to1$ & $12/12$ & $1\ (120)$ & $1\ (144)$ & $0\to2$ (t) \\
\bottomrule
\end{tabular}}
\caption{We list spectral effect of each gauging. $\Delta$ denotes dimensions, with the number of momentum primaries in parentheses. The last column counts $\widehat{\su}(2)_1^3$ factors in $\Lambda(C)$ and $\Lambda(C')$; (u) means the factors of $\Lambda(C')$ are untwisted, i.e.\ neutral factors of $\Lambda(C)$, and (t) means they arise in the twisted sector.}
\label{tab:spectral}
\end{table}
 Repeating this computation for the remaining pairs gives table 3, several patterns emerge:
\begin{itemize}
  \item \textbf{Raising the gap:} The theory with the largest gap, $M_2$, is reached from $F_1$, $M_0$, $M_1$ and $T$. In each case all lightest states of the starting theory are charged and projected out, and no twisted state lies below $\Delta=1$.
  \item \textbf{Twisted states set the gap:} Along $F_1\to M_1$, $F_1\to F_2$, $F_1\to T$, $F_2\to M_0$ and $M_2\to T$ the lightest states of $\Lambda(C')$ lie only in the twisted sector, and in all five cases the gap decreases. This is how gauging lowers the gap.
  \item \textbf{Currents:} Starting from a theory without momentum currents ($F_0$, $F_1$, $F_2$ or $M_1$), the $\widehat{\su}(2)_1^6$ of $M_2$ or $T$ arises entirely in the twisted sector. Starting from $M_0$, one $\widehat{\su}(2)_1^3$ factor is neutral and survives, and the other is twisted. Along $M_2\leftrightarrow T$ the current algebra is untouched, as in section~\ref{sec:example}. Conversely, the $\widehat{\su}(2)_1^3$ of $M_0$ is charged and broken along both edges $M_0\to M_1$.
   \item \textbf{Self-dualities:}
Some connections of the web start and end in the same class; these are the
loops of figure~\ref{fig:web}. Along a loop, gauging $H$ maps $\Lambda(C)$
to $\Lambda(C')$ with $C'\neq C$, but $C'$ is obtained from $C$ by a
block-monomial move, which is a T-duality. The orbifold is therefore the
same CFT as the one we started from: $\Lambda(C)$ is self-dual under
gauging $H$. This happens for $\mathrm F_0$, $\mathrm F_1$, $\mathrm F_2$,
$\mathrm M_0$ and $\mathrm M_1$. It cannot happen for $\mathrm M_2$ or
$\mathrm T$, since each of these classes contains a single code.
\end{itemize}


\section{The orbifolding groupoid as seen from the bulk}
\label{sec:ensemble}
As we have seen in section 3, the twenty-two theories can be interpreted as  topological boundary conditions of the same three-dimensional
topological field theory, the abelian Chern--Simons theory $\mathcal Z_2$ of~\eqref{eq:bulk}
\cite{KS11}. Following Dymarsky and Shapere~\cite{Dymarsky}, we can sum $\mathcal Z_2$ over bulk
topologies instead of fixing one. Exactly as defining a gravitational path integral in the $U(1)$ gravity of Narain averages~\cite{MW20,AJCHT21}. \\
Now, considering our TQFT bulk theory, and by summing over its possible solid-torus geometries, what combination of these boundary codes CFTs appear in the resulting partition function?

\subsection{The handlebody sum of $\mathcal Z_2$}\label{sec:ens-handlebody}
Before summing over topologies, we describe the gauging web of section~\ref{sec:web} from the
bulk. Gauging $H=C'/(C\cap C')$ on the boundary $C$ amounts to condensing the bulk lines of $C'$
on a surface next to it~\cite{KS11,GK21,RSS22}. Figure~\ref{fig:sandwich} shows the three steps.
Inserting the network of $C'$-lines reproduces the twisted partition functions $Z_C[w,w']$ and
their sum~\eqref{eq:orbifold-sum}, and fusing the network onto the boundary replaces $C$ by
$C'$. Each edge of table~\ref{tab:web} is therefore a condensation defect of $\mathcal Z_2$. In
this section we keep the boundary fixed and vary instead the topology of the bulk.
\begin{figure}[t]
\centering
\resizebox{\textwidth}{!}{
\begin{tikzpicture}[
  line join=round, line cap=round, font=\small,
  plane/.style={draw=edge, line width=0.5pt},
  wilson/.style={line width=0.9pt},
  endpt/.style={circle, fill, inner sep=1.1pt},
]
\def\dx{1.0}\def\dy{0.7}\def\h{3.0}\def\L{2.6}
 
\newcommand{\slab}[1]{%
  \begin{scope}[shift={(#1,0)}]
    \fill[bulk] (0,0) -- (\L,0) -- (\L,\h) -- (0,\h) -- cycle;
    \fill[bulk!85!black!8] (0,\h) -- (\L,\h) -- (\L+\dx,\h+\dy) -- (\dx,\h+\dy) -- cycle;
    \draw[edge, line width=0.4pt] (0,0) -- (\L,0);
    \draw[edge, line width=0.4pt] (\dx,\h+\dy) -- (\L+\dx,\h+\dy);
    \draw[edge, line width=0.4pt] (0,\h) -- (\L,\h);
    \draw[edge, densely dotted, line width=0.4pt] (\dx,\dy) -- (\L+\dx,\dy);
    \draw[plane, fill=white, fill opacity=0.85,
          decorate, decoration={snake, amplitude=0.6pt, segment length=6pt}]
      (0,0) -- (\dx,\dy) -- (\dx,\dy+\h) -- (0,\h) -- cycle;
    \node[anchor=north] at (0.5*\dx,-0.05) {$\Psi_c$};
    \node at (0.38*\L+0.5*\dx,0.5*\h+0.75*\dy) {$\mathcal Z_2$};
  \end{scope}}
 
\newcommand{\tbc}[3]{%
  \begin{scope}[shift={(#1,0)}]
    \fill[#2!12] (\L,0) -- (\L+\dx,\dy) -- (\L+\dx,\dy+\h) -- (\L,\h) -- cycle;
    \draw[#2, line width=1.4pt] (\L,0) -- (\L+\dx,\dy) -- (\L+\dx,\dy+\h) -- (\L,\h) -- cycle;
    \node[#2, anchor=north] at (\L+0.5*\dx,-0.05) {#3};
  \end{scope}}
 
\slab{0}
\tbc{0}{bndC}{$C$}
\begin{scope}
  \draw[wilson, bndC] (0.5*\dx,1.15) -- (\L+0.5*\dx,1.15) node[endpt]{};
  \node[bndC, anchor=south] at (1.55,1.15) {$c\in C$};
\end{scope}
\node at (1.8,-0.95) {$Z_C$};
\node[font=\normalsize] at (-0.35,\h+0.6) {(a)};
 
\def\ob{5.4}
\slab{\ob}
\tbc{\ob}{bndC}{$C$}
\begin{scope}[shift={(\ob,0)}]
  \def\xm{\L-0.42}
  \fill[bndCp!6] (\xm,0) -- (\xm+\dx,\dy) -- (\xm+\dx,\dy+\h) -- (\xm,\h) -- cycle;
  \draw[bndCp!60, densely dashed, line width=0.4pt]
    (\xm,0) -- (\xm+\dx,\dy) -- (\xm+\dx,\dy+\h) -- (\xm,\h) -- cycle;
  \foreach \t in {0.2,0.4,0.6,0.8}{
    \draw[bndCp, line width=0.6pt] (\xm+\t*\dx,\t*\dy) -- (\xm+\t*\dx,\t*\dy+\h);
  }
  \foreach \s in {0.12,0.29,0.46,0.63,0.80}{
    \draw[bndCp, line width=0.6pt] (\xm,\s*\h) -- (\xm+\dx,\dy+\s*\h);
  }
  \foreach \t in {0.2,0.4,0.6,0.8}{
    \foreach \s in {0.12,0.29,0.46,0.63,0.80}{
      \fill[bndCp] (\xm+\t*\dx,\t*\dy+\s*\h) circle (0.9pt);
    }
  }
  \draw[wilson, bndC] (0.5*\dx,0.55) -- (\L+0.5*\dx,0.55) node[endpt]{};
  \node[bndCp, anchor=south, align=center, font=\footnotesize]
    at (\xm+0.5*\dx,\h+\dy+0.08) {$\dfrac{1}{|H|}\!\displaystyle\sum_{w\in H}$};
\end{scope}
\node[bndCp, align=left, font=\footnotesize, anchor=west] at (\ob+\L+\dx+0.15,\h-0.25)
  {lines of $C'$\\ $H=C'/(C\cap C')$};
\node at (\ob+1.8,-0.95) {$\dfrac{1}{|H|}\displaystyle\sum_{w,w'\in H} Z_C[w,w']$};
\node[font=\normalsize] at (\ob-0.35,\h+0.6) {(b)};
 
\def\oc{11.6}
\slab{\oc}
\tbc{\oc}{bndCp}{$C'$}
\begin{scope}[shift={(\oc,0)}]
  \draw[wilson, bndCp] (0.5*\dx,1.15) -- (\L+0.5*\dx,1.15) node[endpt]{};
  \node[bndCp, anchor=south] at (1.55,1.15) {$c'\in C'$};
\end{scope}
\node at (\oc+1.8,-0.95) {$Z_{C'}$};
\node[font=\normalsize] at (\oc-0.35,\h+0.6) {(c)};
 
\draw[-{Stealth[length=5pt]}, line width=0.7pt] (3.95,1.9) -- (5.1,1.9)
  node[midway, above, align=center, font=\footnotesize] {insert\\network};
\draw[-{Stealth[length=5pt]}, line width=0.7pt] (\ob+\L+\dx+0.35,1.6) -- (\oc-0.3,1.6)
  node[midway, below, align=center, font=\footnotesize] {fuse onto\\boundary};
\end{tikzpicture}}
\caption{Gauging between boundaries, as seen from the bulk. (a) The slab $\mathcal Z_2\times[0,1]$
with the relative theory $\Psi$ on the left and the topological boundary $C$ on the right. Bulk
lines with labels in $C$ end on $C$, and shrinking the slab gives $Z_C$. (b) A network of bulk
lines with labels in $C'$ is inserted on a surface parallel to $C$, averaged over
$H=C'/(C\cap C')$. Evaluating it gives the orbifold sum~\eqref{eq:orbifold-sum}. (c) Fusing the
network onto the boundary turns $C$ into $C'$, and the slab gives $Z_{C'}$.}
\label{fig:sandwich}
\end{figure}

We work with general $d$ and specialize to $d=2$ at the end. We write the partition vector of the
relative theory as $\Psi_c=\Theta_c/|\eta|^{6d}$, with $c\in M_d$, and let $\{e_c\}$ be the standard
basis of $\C[M_d]$. With $Q(c)=\mathsf q(c)$ and $B(c,c')=\tfrac34\,\beta(c,c')$ modulo one,
the modular group acts by the Weil representation~\cite{weil}
\begin{equation}\label{eq:weil}
  \rho(T)\,e_c=e^{2\pi i Q(c)}e_c,\qquad
  \rho(S)\,e_c=4^{-d}\sum_{c'\in M_d}e^{-2\pi i B(c,c')}e_{c'} .
\end{equation}
For $d=2$ the prefactor is $1/16$. Since $4Q$ is integer valued, $\rho$ factors through
$\mathrm{SL}(2,\Z_4)$. Since $K$ has signature $(3d,3d)$, $\rho(S^2)=\rho(-1)$ acts as $e_c\mapsto e_{-c}$,
without a phase. For $f=\sum_cf(c)e_c$, the function $\Theta[f]/|\eta|^{6d}$ of~\eqref{eq:Theta-f}
is modular invariant whenever $f$ is $\rho$-invariant. This follows from the transformation laws of
the block theta functions used in the proof of proposition~\ref{prop:twisted}.

The path integral of $\mathcal Z_d$ on a solid torus prepares a state on its boundary torus.
With no insertion it prepares the vacuum $e_0$, and with the bulk line $c$ along its core it
prepares $e_c$. A modular transformation $\gamma$ changes which cycle of the boundary torus
is contractible, and the corresponding solid torus prepares $\rho(\gamma)e_c$. The vector $e_0$
is fixed by $\Gamma_\infty=\langle T,-1\rangle$, and so is $e_c$ when $\mathsf q(c)=0$ and $c$
has order at most two, since $T$ acts by the spin of $c$ and $-1=S^2$ by $c\mapsto -c$. Distinct
solid tori are therefore labelled by the cosets $\gamma\Gamma_\infty$ in $\mathrm{SL}(2,\Z)/\Gamma_\infty$.
The principal congruence subgroup $\Gamma(4)$ acts trivially through $\rho$. Each coset of
$\mathrm{SL}(2,\Z_4)/\Gamma_\infty$ therefore collects infinitely many solid tori that prepare the same
state. We drop this common multiplicity, as in~\cite{Dymarsky,ADS23}, and define the genus-one sum over
handlebodies of a $\Gamma_\infty$-invariant vector $v$ as
\begin{equation}\label{eq:poincare}
  \mathcal P_d[v]=\sum_{\gamma\in\mathrm{SL}(2,\Z_4)/\Gamma_\infty}\rho(\gamma)\,v ,
  \qquad
  Z[v]=\big\langle\Psi,\mathcal P_d[v]\big\rangle .
\end{equation}
The sum is well defined because $v$ is fixed by $\Gamma_\infty$. It has six terms, since
$|\mathrm{SL}(2,\Z_4)|=48$ and the image of $\Gamma_\infty$ has order eight. Being a sum over the
cosets of a group, $\mathcal P_d[v]$ is $\rho$-invariant, and $Z[v]$ is modular invariant.

This is the analogue of the Maloney-Witten sum~\cite{MW20,AJCHT21}. For $D$ free bosons the sum over handlebodies gives
\begin{equation}\label{eq:MW}
  Z_{\rm MW}=\sum_{\gamma\in\Gamma_\infty\backslash\mathrm{SL}(2,\Z)}\frac{1}{|\eta(\gamma\tau)|^{2D}}
  =\frac{E(\tau,D/2)}{\tau_2^{D/2}|\eta(\tau)|^{2D}},
  \qquad
  E(\tau,s)=\sum_{\gamma\in\Gamma_\infty\backslash\mathrm{SL}(2,\Z)}(\operatorname{Im}\gamma\tau)^{s},
\end{equation}
which converges for $D>2$. On the boundary side, Siegel-Weil identifies~\eqref{eq:MW} with the average of
$Z_{\rm Narain}$ over the moduli space $\mathcal M_D$:
\[
  \frac{E(\tau,D/2)}{\tau_2^{D/2}|\eta|^{2D}}=\int_{\mathcal M_D}Z_{\rm Narain}\,d\mu .
\]
In our case the infinite sum over $\Gamma_\infty\backslash\mathrm{SL}(2,\Z)$ collapses to the six terms of
$\mathrm{SL}(2,\Z_4)/\Gamma_\infty$:
\begin{equation}\label{eq:P-e0}
  Z_{\rm grav}=\big\langle\Psi,\mathcal P_d[e_0]\big\rangle,\qquad
  \mathcal P_d[e_0]=\sum_{\gamma\in\mathrm{SL}(2,\Z_4)/\Gamma_\infty}\rho(\gamma)\,e_0 .
\end{equation}
This sum plays the role of the real-analytic Eisenstein series $E(\tau,D/2)$. The finite analogue of
the Siegel-Weil formula is the expansion of $\mathcal P_d[e_0]$ in the code vectors
$v_C=\sum_{c\in C}e_c$, where $C$ runs over the Lagrangian subgroups of $(M_d,\mathsf q)$. Indeed
$\langle\Psi,v_C\rangle=Z_C$, so such an expansion writes $Z_{\rm grav}$ as a combination of code
theories.

\subsection{A finite Siegel-Weil identity}\label{sec:ens-SW}

Call a Lagrangian subgroup $C\subset M_d$ free when $C\simeq\Z_4^d$. Let $\mathcal F_d$ be the set of
free Lagrangians, and let $T_d=2M_d$ be the torsion code. For $d=2$, $\mathcal F_2$ is the free orbit
$\{F_0,F_1,F_2\}$ of section~\ref{sec:bulk-sym}, and $T_2$ is the code $T$. Put
\begin{equation}\label{eq:Ad}
  A_d=\frac1{|\mathcal F_d|}\sum_{C\in\mathcal F_d}v_C ,
\end{equation}
the uniform average of the free code vectors. Let $V=M_d/2M_d\simeq\F_2^{2d}$ carry the reduced quadratic
form $\bar q(u,v)=u\cdot v\in\F_2$.

\begin{theorem}\label{thm:SW}
For every $d\geq1$,
\begin{equation}\label{eq:SW}
  \mathcal P_d[e_0]=\big(1+2^{1-d}\big)A_d+\big(4^{1-d}-2^{-d}\big)\,v_{T_d}.
\end{equation}
Equivalently, in the basis $\{e_c\}$,
\begin{equation}\label{eq:SW-basis}
  \mathcal P_d[e_0]=e_0+4^{1-d}\sum_{Q(c)=0}e_c+2^{-d}\sum_{v\in V}(-1)^{\bar q(v)}e_{2v}.
\end{equation}
\end{theorem}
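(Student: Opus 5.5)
We compute $\mathcal P_d[e_0]$ directly from the Weil representation \eqref{eq:weil}, then match it with the right-hand side of \eqref{eq:SW} by counting free Lagrangians through each label.

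\textbf{Coset representatives.} The six cosets of $\mathrm{SL}(2,\Z_4)/\Gamma_\infty$ correspond to the six primitive vectors $(c,d)$ modulo $\pm$ in $\Z_4^2$, i.e. to the points of $\mathbb P^1(\Z_4)$. We take representatives
\[
1,\quad S,\quad ST,\quad ST^2,\quad ST^3,\quad ST^2S.
\]
The first five give the bottom rows $(0,1)$ and $(1,k)$ for $k\in\Z_4$. The last gives a bottom row with $c=2$, $d$ odd.

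\textbf{Evaluating each term.} Clearly $\rho(1)e_0=e_0$ and $\rho(S)e_0=4^{-d}\sum_c e_c$. Then
\[
\rho(ST^k)e_0=\rho(S)e_0=4^{-d}\sum_c e_c,
\]
since $e_0$ has spin zero. The genuinely different coset is $\rho(S)\rho(T^k)\rho(S)e_0$ with bottom row $(2,1)$. Applying \eqref{eq:weil} twice, its coefficient on $e_{c'}$ is
\[
4^{-2d}\sum_c e^{2\pi i (k Q(c)-B(c,c'))}.
\]
This is a quadratic Gauss sum over $M_d=\bigoplus(\Z_4^2,\tfrac34 ab)$, and it factorizes over the $d$ hyperbolic pairs, so it suffices to compute it on one pair.

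I must be careful to pick the correct representative set, since $ST^kS$ for $k=1,2,3$ are not all new. Summing the four terms with $c=1$ gives
\[
4^{-d}\sum_c\Big(\sum_k e^{2\pi i kQ(c)}\Big)e_c=4^{1-d}\sum_{Q(c)=0}e_c,
\]
because the character sum over $k$ equals $4\cdot\mathbf 1_{Q(c)=0}$. This is exactly the middle term of \eqref{eq:SW-basis}.

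It remains to show that the $c=2$ coset gives $2^{-d}\sum_{v\in V}(-1)^{\bar q(v)}e_{2v}$. For one pair I would compute $\rho(ST^2S)e_0$ explicitly. The sum over $(a,b)\in\Z_4^2$ of $i^{3(2ab)}i^{-3(ab'+ba')}$ vanishes unless $(a',b')$ is even. For even $(a',b')=2(u,v)$ it equals $4\cdot(-1)^{uv}$ up to normalization. With the prefactor this gives $\tfrac12(-1)^{uv}e_{2(u,v)}$ per pair, hence $2^{-d}$ overall. The main obstacle is this single-pair Gauss sum, together with checking that the signs and the coset bookkeeping are right.

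\textbf{Matching to code vectors.} To pass from \eqref{eq:SW-basis} to \eqref{eq:SW}, I compute for each singular label $c$ the number $N(c)$ of free Lagrangians containing it. By transitivity of $\Ot(M_d,\mathsf q)$ on labels of a given type (order and reduction data), $A_d$ has coefficient $N(c)/|\mathcal F_d|$ depending only on the type of $c$. There are three cases:
\begin{enumerate}[label=(\alph*)]
\item $c$ of order four. Free Lagrangians reduce mod 2 to maximal totally singular subspaces of $V$, and a free Lagrangian is determined by that subspace together with lift data, so one can count $N(c)$ directly.
\item $c=2v$ with $\bar q(v)=0$.
\item $c=2v$ with $\bar q(v)=1$. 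Here every Lagrangian contains $2v$ only if $v\in C^\perp \bmod 2$.
\end{enumerate}
Since $v_{T_d}$ is supported on $2M_d$, the order-four coefficient forces the coefficient of $A_d$ to be
\[
4^{1-d}\,|\mathcal F_d|/N(c)\overset{?}{=}1+2^{1-d}.
\]
Both sides are counts in the quadratic space $\F_2^{2d}$ of plus type, where the number of maximal singular subspaces is $\prod_{i=0}^{d-1}(2^i+1)$. Consistency on the labels $2v$ and on $e_0$ then fixes the coefficient of $v_{T_d}$, and it should come out to $4^{1-d}-2^{-d}$. I would verify the case $d=1$ (three codes) and $d=2$ (twelve free codes) as sanity checks.
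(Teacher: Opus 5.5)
Your plan follows the paper's route: six Weil-representation vectors, a character sum over $k$ for four of them, a factorized Gauss sum for $ST^2S$, then counting free Lagrangians through each label type. However, the coset step is wrong as written, and the matching step is not carried out.

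\textbf{The coset representatives.} $\mathcal P_d$ sums $\rho(\gamma)e_0$ over left cosets $\gamma\Gamma_\infty$. These are labelled by the first column of $\gamma$, not its bottom row. Your list $1,S,ST,ST^2,ST^3,ST^2S$ instead represents $\Gamma_\infty\backslash\mathrm{SL}(2,\Z_4)$. In particular $S,ST,ST^2,ST^3$ all lie in the single left coset $S\Gamma_\infty$, as your own identity $\rho(ST^k)e_0=\rho(S)e_0$ shows. Taken literally, your sum is $e_0+4^{1-d}\sum_c e_c+\rho(ST^2S)e_0$, which is not even $\rho(T)$-invariant. The phases $e^{2\pi i kQ(c)}$ you then use come from $\rho(T^kS)e_0=4^{-d}\sum_c\zeta(c)^k e_c$. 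So you should take $1$, $T^kS$ ($k\in\Z_4$) and $ST^2S$, or the paper's $1,S,TS,T^2S,STS,ST^2S$; there $STS=T^{-1}ST^{-1}$ gives $\rho(STS)e_0=\rho(T^{-1}S)e_0$. A smaller slip: the one-pair sum $\sum_{a,b\in\Z_4}(-1)^{ab}i^{ab'+ba'}$ equals $8(-1)^{uv}$ at $(a',b')=(2u,2v)$, not $4(-1)^{uv}$. Only the value $8$ gives your $\tfrac12$ after the factor $4^{-2}$.

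\textbf{The matching step.} The more substantive gap is that the passage from \eqref{eq:SW-basis} to \eqref{eq:SW} is left as a hope (your ``$\overset{?}{=}$'' and ``should come out''). Two unknown coefficients must satisfy four equations, on the labels $0$, $2v$ with $v\neq0$ and $\bar q(v)=0$, $2v$ with $\bar q(v)=1$, and primitive isotropic $c$. The content lies in counts you do not make:
\begin{enumerate}[label=(\roman*)]
\item Each maximal totally singular $U\subset V$ has exactly $2^{\binom d2}$ free lifts. After fixing one lift and a dual hyperbolic basis, they are parametrized by symmetric $\F_2$-matrices with zero diagonal.
\item Exactly $2^{\binom{d-1}2}$ of these lifts contain a given primitive isotropic $c$ with $c\bmod 2\in U$, since $c$ fixes one row.
\item A nonzero singular $v$ lies in the fraction $1/(2^{d-1}+1)$ of the maximal totally singular subspaces. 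This follows by double counting against the $(2^{d-1}+1)(2^d-1)$ nonzero singular vectors, using transitivity of the orthogonal group on them.
\item A free code has $C[2]=2U$ with $U$ totally singular. So it contains $2v$ iff $v\in U$, and never when $\bar q(v)=1$. Your item (c) does not reach this.
\end{enumerate}
From (i)--(iii), $N(c)/|\mathcal F_d|=1/\bigl(2^{d-1}(2^{d-1}+1)\bigr)$, which fixes the coefficient $1+2^{1-d}$ of $A_d$. Then (iv) fixes the coefficient $4^{1-d}-2^{-d}$ of $v_{T_d}$. Finally, (i), (iii) and (iv) give the fraction $1/(2^{d-1}+1)$ of free codes containing a singular $2v$. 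There the identity $2^{1-d}+4^{1-d}-2^{-d}=4^{1-d}+2^{-d}$ is a genuine consistency check, as is the one on $e_0$. Without these counts the stated equivalence is not proved.
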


The proof is given in appendix~\ref{app:handlebody}. Formula~\eqref{eq:SW-basis} is a direct computation
of the six handlebody vectors. The term $e_0$ comes from the identity. The second term comes from the four
handlebodies $S$, $TS$, $T^2S$ and $STS$. The last term comes from the single handlebody $ST^2S$. Formula~\eqref{eq:SW} then follows by
counting how many free codes contain a given label.

\begin{corollary}
For $d\geq3$, the handlebody state $\mathcal P_d[e_0]$ cannot be expressed as a nonnegative linear combination of the code boundary states $v_C$, where $C$ ranges over the Lagrangian subgroups of $M_d$.
\end{corollary}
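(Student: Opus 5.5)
The plan is to read off a negative coefficient from the explicit basis form \eqref{eq:SW-basis} of Theorem~\ref{thm:SW}. Every code vector $v_C=\sum_{c\in C}e_c$ has nonnegative coefficients in the basis $\{e_c\}$. Hence any nonnegative combination $\sum_C\lambda_C v_C$, with $\lambda_C\ge0$, also has all coefficients nonnegative. So it suffices to exhibit one label $c\in M_d$ at which the $e_c$-coefficient of $\mathcal P_d[e_0]$ is strictly negative when $d\ge3$.

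I take labels of the form $c=2v$ with $v\in V$ nonzero. Such a label has order two, and $Q(2v)=\tfrac34\cdot4\,(u\cdot u')\equiv0$ for $v=(u,u')$, so every $e_{2v}$ appears in the second sum of \eqref{eq:SW-basis}. The map $v\mapsto 2v$ is injective from $V$ into $M_d$, so each nonzero $v$ contributes exactly one term $e_{2v}$ to the third sum. Collecting terms, the coefficient of $e_{2v}$ for $v\neq0$ is
\[
4^{1-d}+2^{-d}(-1)^{\bar q(v)}.
\]
I then choose $v$ with $\bar q(v)=1$, which exists for every $d\ge1$; for instance $v=(e_1,e_1)$, the reduction of the label $(1,0,\dots,1,0,\dots)$. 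The coefficient becomes $4^{1-d}-2^{-d}=2^{-d}(2^{2-d}-1)$, and this is negative exactly when $d\ge3$. That gives the contradiction with any nonnegative expansion.

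I do not expect a real obstacle; the only points to check are bookkeeping. First, $e_{2v}$ must not also coincide with $e_0$, which holds because $v\ne0$. Second, the second sum must range over all $c$ with $Q(c)=0$, including the order-two labels. As a consistency check, at $d=1,2$ the same coefficient is $\tfrac12$ and $0$ respectively, matching the vanishing torsion coefficient at $d=2$.

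If one also wants a statement in terms of \eqref{eq:SW}, I would note that the negative coefficient of $v_{T_d}$ alone is not sufficient, because the $v_C$ are linearly dependent. For that reason the coefficientwise argument above is the one I would use.
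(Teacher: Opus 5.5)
Your proof is correct and follows the paper's argument. Both pick $v$ with $\bar q(v)=1$ (first hyperbolic pair $(1,1)$), read off the coefficient $4^{1-d}-2^{-d}=2^{-d}(2^{2-d}-1)<0$ of $e_{2v}$ from the finite Siegel--Weil identity, and contradict the fact that any $\sum_C\lambda_C v_C$ with $\lambda_C\geq0$ has coefficient $\sum_{C\ni 2v}\lambda_C\geq0$ there; your extra checks ($Q(2v)=0$, injectivity of $v\mapsto 2v$) and your remark that the negative $v_{T_d}$ coefficient alone would not suffice are both correct.
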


\begin{proof}
Choose $v\in M_d/2M_d$ with $\bar q(v)=1$. Such a vector exists: take its first hyperbolic pair to be $(1,1)$ and all remaining coordinates to be zero. By the finite Siegel–Weil identity, the coefficient of $e_{2v}$ in $\mathcal P_d[e_0]$ is
\[
4^{1-d}-2^{-d}
=2^{-d}\bigl(2^{2-d}-1\bigr)<0
\]
for $d\geq3$.

On the other hand, each code boundary state is $v_C=\sum_{c\in C}e_c$. Hence, in any combination $\sum_C\lambda_Cv_C$ with $\lambda_C\geq0$, the coefficient of $e_{2v}$ is
\[
\sum_{C:\,2v\in C}\lambda_C\geq0.
\]
This contradicts the negative coefficient above and proves the assertion.     
\end{proof}
Accordingly, for $d\geq3$, the handlebody state has negative components in the basis $\{e_c\}$ and
cannot be expressed as a nonnegative linear combination of code boundary states. This does not
by itself exclude a positive decomposition of the corresponding scalar partition function. The
map $f\mapsto\Theta[f]$ is not injective, since for instance $\Theta_c=\Theta_{-c}$, so distinct
bulk states may yield the same partition function.

\begin{corollary}\label{cor:SW-d2}
For $d=2$ there are twelve free codes, four in each of the classes $F_0$, $F_1$, $F_2$. The coefficient of
$v_{T}$ in~\eqref{eq:SW} vanishes, and
\begin{equation}\label{eq:Zgrav-d2}
  \mathcal P_2[e_0]=\frac18\sum_{C\in\mathcal F_2}v_C ,\qquad
  Z_{\rm grav}(\tau)=\frac12\big(Z_{F_0}(\tau)+Z_{F_1}(\tau)+Z_{F_2}(\tau)\big).
\end{equation}
\end{corollary}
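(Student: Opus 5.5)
The corollary follows by specializing Theorem~\ref{thm:SW} to $d=2$, so the plan has three steps.

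First, the coefficient of $v_{T}$. Setting $d=2$ in \eqref{eq:SW} gives
\[
4^{1-2}-2^{-2}=\tfrac14-\tfrac14=0 ,
\]
so the torsion term drops out. The coefficient of $A_2$ is
\[
1+2^{1-2}=\tfrac32 .
\]
Hence $\mathcal P_2[e_0]=\tfrac32 A_2$.

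Second, the number of free codes. From the classification recalled in section~\ref{sec:setup}, the Lagrangians of type $\Z_4^2$ are exactly the twelve codes lying over the six totally singular planes of $\F_2^4$. By section~\ref{sec:bulk-sym} these are the codes in the orbit $\{F_0,F_1,F_2\}$, and table~\ref{tab:seven} gives four codes in each class. So $|\mathcal F_2|=12$, and
\[
\mathcal P_2[e_0]=\tfrac32\cdot\tfrac1{12}\sum_{C\in\mathcal F_2}v_C=\tfrac18\sum_{C\in\mathcal F_2}v_C .
\]

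Third, the partition function. Pair both sides with $\Psi$ and use $\langle\Psi,v_C\rangle=Z_C$. Codes related by a block-monomial move are T-dual and give the same CFT, so each class contributes four equal terms $Z_C=Z_{F_i}$. This yields
\[
Z_{\rm grav}=\tfrac18\cdot 4\,(Z_{F_0}+Z_{F_1}+Z_{F_2})=\tfrac12(Z_{F_0}+Z_{F_1}+Z_{F_2}).
\]

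The only point needing care is this last identification. Block-monomial moves preserve both $\eta$ and $\mathcal H$, and they map the block theta products $\prod_i\Theta_{a_ib_i}$ onto one another via $\Theta_{ab}=\Theta_{ba}=\Theta_{-a,-b}$ and the permutation of pairs. Therefore $Z_{gC}=Z_C$ holds as functions of $\tau$, not merely up to duality. As a consistency check I would also verify that $\tfrac18\sum_{C\in\mathcal F_2} v_C$ has $e_0$-coefficient $\tfrac{12}{8}=\tfrac32$, compared with the value from \eqref{eq:SW-basis} at $d=2$:
\[
1+\tfrac14+\tfrac14=\tfrac32 .
\]
The two agree.
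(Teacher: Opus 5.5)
Your proposal is correct and is the argument the paper intends. The corollary is the direct specialization of Theorem~\ref{thm:SW} at $d=2$: the torsion coefficient $4^{-1}-2^{-2}$ vanishes, $\tfrac32\cdot\tfrac1{12}=\tfrac18$, and pairing with $\Psi$ gives $Z_{\rm grav}$. Your check that $Z_{gC}=Z_C$ as functions of $\tau$ under block-monomial moves, via $\Theta_{ab}=\Theta_{ba}=\Theta_{-a,-b}$ and the permutation of pairs, supplies the one step the paper leaves implicit when it collapses the twelve terms into $4(Z_{F_0}+Z_{F_1}+Z_{F_2})$.
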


Each $Z_C$ starts with the vacuum, so the vacuum of $Z_{\rm grav}$ appears with weight $3/2$. The
vacuum-normalized partition function is
\begin{equation}\label{eq:Zgrav-norm}
  \frac{Z_{\rm grav}}{3/2}=\frac13\big(Z_{F_0}+Z_{F_1}+Z_{F_2}\big).
\end{equation}
  This answers the question posed above. The genus-one handlebody sum of $\mathcal Z_2$ equals the
  uniform average over the twelve free boundaries, and this expansion is supported on a single orbit
  of $\Ot(M_2,\mathsf q)$, the free orbit. This differs
from the prime and square-free cases, where the handlebody sum gives the average over all Lagrangian
subgroups with equal weights~\cite{ADS23,Dymarsky,DHM25}.

Interestingly, the mixed codes do not appear in~\eqref{eq:SW}. The torsion code appears with coefficient
$4^{1-d}-2^{-d}$, which vanishes for $d=2$. The expansion of $\mathcal P_2[e_0]$ in code vectors is, however, not
unique. Let $V_F$, $V_M$ and $V_T$ be the sums of the code vectors over the twelve free, the nine mixed
and the one torsion code. Their coefficients on the four relevant types of labels are\begin{equation}\label{eq:orbit-coeffs}
\begin{array}{lcccc}
  \toprule
  & 0 & 2v\ \text{singular} & 2v\ \text{nonsingular} & c\ \text{primitive isotropic} \\
  \midrule
  V_F & 12 & 4 & 0 & 2\\
  V_M & 9 & 5 & 3 & 1\\
  V_T & 1 & 1 & 1 & 0\\
  \bottomrule
\end{array}
\end{equation}
The zero in the first row reflects that a free code contains no label $2v$ with $\bar q(v)=1$, since its
two-torsion is $2U$ with $U$ totally singular. The table gives the relation $V_F-2V_M+6V_T=0$. The same
handlebody vector therefore also admits the signed expansion
\begin{equation}\label{eq:signed}
  \mathcal P_2[e_0]=\tfrac14V_M-\tfrac34V_T .
\end{equation}
Of the two expansions, the positive one supported on the free orbit is the natural one.
The relation $V_F-2V_M+6V_T=0$ also determines the uniform
average over all twenty-two boundary codes. Let $\mathcal{L}_2$
denote the set of even self-dual codes in $M_2$, and put
\[
Z_{\mathrm{all}}(\tau)
  =\frac{1}{22}\sum_{C\in\mathcal{L}_2}Z_C(\tau),
\qquad
Z_{\mathrm{free}}(\tau)
  =\frac{1}{3}\bigl(
      Z_{F_0}(\tau)+Z_{F_1}(\tau)+Z_{F_2}(\tau)
    \bigr).
\]
Since $V_M=\frac12V_F+3V_T$ and
$\mathcal{P}_2[e_0]=\frac18V_F$, we obtain
\[
V_F+V_M+V_T
  =\frac32V_F+4V_T
  =12\mathcal{P}_2[e_0]+4V_T.
\]
Pairing with the partition vector $\Psi$ and dividing by
twenty-two therefore gives
\begin{equation}
\label{eq:full-boundary-average}
Z_{\mathrm{all}}(\tau)
  =\frac{6}{11}Z_{\mathrm{grav}}(\tau)
     +\frac{2}{11}Z_T(\tau)
  =\frac{9}{11}Z_{\mathrm{free}}(\tau)
     +\frac{2}{11}Z_T(\tau).
\end{equation}
Thus the uniform average over all labeled boundaries admits
a positive expression involving only the free average and the
torsion theory. The mixed boundaries are included through the
linear relation among their genus-one code vectors. Moreover,
$v_T=V_T$ is invariant under the Weil representation, so
$\mathcal{P}_2[v_T]=6v_T$. Hence the same average can be written
as a modular sum with an explicit corrected seed:
\begin{equation}
\label{eq:full-boundary-seed}
Z_{\mathrm{all}}(\tau)
  =\frac{6}{11}
    \left\langle
      \Psi,\,
      \mathcal{P}_2\!\left[e_0+\frac{1}{18}v_T\right]
    \right\rangle.
\end{equation}
These identities determine the torsion correction required
to recover the full uniform ensemble at genus one.
They do not assert an equality of the corresponding
higher-genus averages.

Additionally, for $d=1$, formula~\eqref{eq:SW} reads $\mathcal P_1[e_0]=2A_1+\tfrac12v_{T_1}$, a positive combination of
both orbits. There are two free codes, $\{(a,0)\}$ and $\{(0,b)\}$, so this is
$v_{\{(a,0)\}}+v_{\{(0,b)\}}+\tfrac12v_{T_1}$. It reproduces the genus-one $\Z_4$ identity of
Dymarsky and Shapere~\cite[section~2.3]{Dymarsky}. For $d\geq3$ the coefficient $4^{1-d}-2^{-d}$ is negative. Equivalently, the coefficient of
$\mathcal P_d[e_0]$ on the labels $2v$ with $\bar q(v)=1$ is negative. The genus-one handlebody sum is
then not a positive measure on Lagrangian subgroups. This does not rule out positive ensembles from bulk
sums enlarged by line insertions or other topologies.

\section{Discussion}\label{sec:disc}

The $A_3$ family is the simplest family of Narain code CFTs in which the alphabet is not
square-free and the codes are not all free. Most of
our results concern features that are absent for prime or square-free $N$, or that are hidden
in the generator-matrix description $(I\,|\,B)$ of~\cite{MO25}:
\begin{itemize}
    \item \textbf{The orbifold groupoid beyond free codes:}
The fact that two topological boundaries of an abelian Chern-Simons theory are related by
gauging is known~\cite{KS11,GK21}. Gaugings of $\Z_N$ symmetries of code CFTs were
studied in~\cite{Kawabata:2023iss,AKN26}, and shift orbifolds between Narain lattices were given
in~\cite{IMP26}. For the ($A_3, \Z_4$) family at $d=2$, we determined the full structure of the groupoid with the order and the
isomorphism type of each gauge group following from the reductions of the two codes modulo two,
through proposition~\ref{prop:quotient-order}. It
replaces the computation of $C'/(C\cap C')$ by a statement about subspaces of the split
quadratic space $\F_2^4$.\\
From this, we identified two features that would not have appeared in a description restricted
to free codes. First, the torsion class $T$ is within one gauging of order two or four of every other class. It is reached
from every free class by a single $\Z_2\times\Z_2$ gauging, along which the full
$\widehat{\su}(2)_1^6$ current algebra arises in the twisted sector. Second, the three pairs $(F_1,M_0)$, $(F_0,M_2)$ and
$(F_2,M_2)$ are separated by gaugings of order eight only. The minimal gauging between two
boundaries is therefore not controlled by the bulk orbit structure alone: $F_1$ and $M_0$ lie in
different orbits, as do $F_1$ and $M_2$, but only the second pair is one $\Z_2$ gauging apart. 

\item \textbf{Invertible symmetries versus gaugings:}
The bulk symmetries split the twenty-two theories into three orbits, one for each group type of
the code: $\Z_4^2$, $\Z_4\oplus\Z_2^2$ and $\Z_2^4$. Two theories in the same orbit are related by
a bulk symmetry, either a T-duality or a fractional $B$-field shift. Two theories in different
orbits are related only by gauging. The handlebody sum is invariant under the bulk symmetries, so
it gives the same weight to all theories in an orbit. The group type of the code is thus a bulk invariant of the boundary condition. This is why the result of
section~\ref{sec:ens-SW} is a sum over whole orbits.

\item 
\textbf{The handlebody sum for $\Z_4$:}
For odd prime powers, the sum over handlebodies and its completion by line defects were worked
out in~\cite{Angelinos:2025tqft}. For $\Z_4$, only a single hyperbolic pair had been
treated~\cite{Dymarsky}. Theorem~\ref{thm:SW} closes this gap at genus one for every $d$. Three
of its consequences differ from the odd case. First, at $d=2$ the coefficient of the torsion
code vanishes, and the handlebody sum admits a nonnegative expansion supported on the free orbit
alone. Second, for $d\geq3$ this coefficient is negative, so the handlebody state is not a
nonnegative combination of code boundary states. Whether the
scalar partition function admits a positive decomposition remains open. Third, at $d=2$ the
uniform average over all twenty-two boundaries requires a single correction, proportional to the
torsion code vector. The mixed theories then enter through the linear relation
$V_F-2V_M+6V_T=0$ rather than through a contribution of their own.
\end{itemize}There are several ways to extend the present work:
\begin{itemize}
    \item \textbf{Towards Higher genus:} Our ensemble statements hold at genus one. The relation $V_F-2V_M+6V_T=0$ is a relation among genus-one code vectors, and there is no reason for it to survive at genus $g\geq2$, where the code vectors live in $\C[M_d^{\,g}]$ and the mixed codes may become linearly independent of the free and torsion ones. If it fails, the correction to the handlebody sum needed to reproduce the uniform average will involve more than the torsion code, and the mixed theories will have to be produced by bulk configurations of their own. The general framework for such averages over the mapping class group is given in~\cite{Romaidis:2023zpx}. It would be interesting to identify which non-handlebody topologies, or which configurations of line defects in the sense of~\cite{Dymarsky,Angelinos:2025tqft}, produce each orbit separately, and whether a positive ensemble exists for $d\geq3$.
    \item \textbf{Duality defects:}
Five of the seven classes are self-dual under some gauging. In each case the gauged theory is
related to the original one by a block-monomial move. Composing the gauging interface with this
T-duality gives a topological interface from the theory to itself, which is non-invertible.
For gaugings by $\Z_2\times\Z_2$ one expects fusion rules of Tambara-Yamagami type~\cite{TY98}.
For gaugings by $\Z_4$ one expects the corresponding $\Z_4$ category. This would give an explicit family of Narain CFTs with non-invertible duality defects, in which the defects are read off from the code. Such defects constrain the spectrum and the
renormalization group flows of the theory~\cite{CLSWY19}. In the bulk they correspond to
condensation defects of $\mathcal Z_2$~\cite{RSS22}.
\item \textbf{Fermionic theories:} Even though the positive lattice of $T$ is $\Z^{12}$, which is odd, it does not make $T$ a fermionic theory. For a vector $z=(x,y)$ of the Narain lattice, the scaling dimension and the spin of the corresponding state are $\Delta=\tfrac12|z|^2$ and $s=\tfrac12\eta(z,z)$. Oddness of the
positive form only means that $\Delta$ can be half-integral. The spin is controlled by the Lorentzian form, and the Narain lattice is even for this form, since $\eta(z,z)=2x\cdot y\in2\Z$.
Every local state of $T$ therefore has integral spin, and $T$ is a bosonic theory.  A fermionic theory differs in two ways. It contains local operators of half-integral spin, and its partition functions depend on a choice of spin structure. Obtaining a fermionic version of the present family would require a separate construction, in which a non-anomalous $\Z_2$ symmetry is specified as fermion parity and the
partition functions are defined on surfaces with spin structure~\cite{KTT19,GK16}. We leave this to future work.
\end{itemize}

\acknowledgments
This research is partially supported by an NSERC Discovery Grant.
\appendix
 
\section{Proofs for section~\ref{sec:web}}\label{app:web-proofs}

In this appendix we prove the results of section~\ref{sec:web}. Appendix~\ref{app:modular}
restates the twisted-sector results of~\cite{A3math} in terms of code labels.
Appendix~\ref{app:reduction} shows that the gauge group $C'/(C\cap C')$ is determined by the
reductions of $C$ and $C'$ modulo two. Appendix~\ref{app:web-proof} applies this to every pair
of classes, and appendix~\ref{app:pairs} lists an explicit pair of codes for each connection.
\subsection{Twisted sectors and the orbifold sum}\label{app:modular}

The statements of this subsection are~\cite[proposition~7.1]{A3math}, where they are proved with
lattice shift vectors. We restate them in terms of code labels, which is the form used in
section~\ref{sec:web}.

Let $C$, $C'$ be even self-dual, $I=C\cap C'$ and $H\simeq C'/I$ as in \eqref{eq:I-H}. Put
$f_{w,w'}=\mathbf 1_{C+w}\,\chi_{w'}$, so that $Z_C[w,w']=\Theta[f_{w,w'}]/|\eta|^{6d}$.

\begin{lemma}[{\cite[equation~(40)]{A3math}}]\label{lem:dual-quotients}
The pairing $(c+I,c'+I)\mapsto\beta(c,c')$ is perfect on $C/I\times C'/I$. Hence $C/I\simeq C'/I$.
\end{lemma}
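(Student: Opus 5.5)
The plan is to show that the pairing is well defined and has trivial left and right kernels, and then conclude by a duality count.

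\emph{Well defined.} The pairing $(c+I,c'+I)\mapsto\beta(c,c')\in\Z_4$ does not depend on the representatives. If $c$ is replaced by $c+i$ with $i\in I\subset C'$, then $\beta(i,c')=0$, because $C'$ is self-dual and $c'\in C'$. Symmetrically, if $c'$ is replaced by $c'+i$, then $\beta(c,i)=0$, because $i\in C=C^\perp$. So the form descends to a bilinear map $C/I\times C'/I\to\Z_4$.

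\emph{Trivial kernels.} Suppose $c\in C$ satisfies $\beta(c,C')=0$. Then $c\in C'^\perp=C'$, so $c\in C\cap C'=I$. The right kernel is trivial by the same argument with the roles of $C$ and $C'$ exchanged.

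\emph{Conclusion.} The map $C/I\to\mathrm{Hom}(C'/I,\Z_4)$, $c+I\mapsto\beta(c,\cdot)$, is injective. Since $\Z_4$ is an injective $\Z_4$-module, Pontryagin duality applies to finite $\Z_4$-modules, so $|\mathrm{Hom}(C'/I,\Z_4)|=|C'/I|$, and hence $|C/I|\le |C'/I|$. The symmetric argument gives the reverse inequality. The two orders are therefore equal, the injection is a bijection, and the pairing is perfect. This gives $C/I\simeq\widehat{C'/I}$, and a finite abelian group is (non-canonically) isomorphic to its character group, so $C/I\simeq C'/I$.

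A shortcut for the orders is also available: $|C|=|C'|=4^d$ for Lagrangians, so $|C/I|=|C'/I|$ directly. The one subtle point is that perfectness must be checked over $\Z_4$ rather than $\F_2$. This is handled by using the values of $\beta$ in $\Z_4$ and identifying $\mathrm{Hom}(-,\Z_4)$ with the character group through $\Z_4\simeq\mu_4$. In particular $\chi_{w'}$ realizes this duality, which is what appendix~\ref{app:modular} uses afterwards.
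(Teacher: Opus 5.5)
Your proof is correct and is the standard argument: well-definedness and the triviality of both kernels follow directly from $C=C^\perp$ and $C'=C'^\perp$, and the equality $|C/I|=|C'/I|$ (most simply from $|C|=|C'|=4^d$) upgrades the injection $C/I\hookrightarrow\mathrm{Hom}(C'/I,\Z_4)$ to a bijection. The paper gives no proof of its own here and simply cites \cite[equation~(40)]{A3math}, so there is nothing further to compare; one small point is that your order count only needs $C'/I$ to be killed by $4$, which makes $\mathrm{Hom}(C'/I,\Z_4)$ its character group, and does not need the injectivity of $\Z_4$ as a $\Z_4$-module.
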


\begin{proposition}\label{prop:twisted}
$Z_C[w,w']$ depends only on the classes of $w,w'$ in $C'/I$, and \eqref{eq:twisted-modular} holds.
\end{proposition}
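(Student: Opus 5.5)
The plan is to write $Z_C[w,w']=\Theta[f_{w,w'}]/|\eta|^{6d}$ with $f_{w,w'}=\mathbf 1_{C+w}\chi_{w'}$ and reduce everything to the action of the Weil representation \eqref{eq:weil} on the finitely supported function $f_{w,w'}$. I will use the fact stated in section~\ref{sec:ens-handlebody}, that $\Theta[f]/|\eta|^{6d}$ transforms under $\gamma$ as $\Theta[\rho(\gamma)^{\pm1}f]/|\eta|^{6d}$. This follows from the block laws $\Theta_{ab}(\tau+1)=e^{2\pi i\frac34ab}\Theta_{ab}$ and the $S$-law with kernel $\frac14 e^{-2\pi i\frac34(ab'+ba')}$ per block. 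The whole proposition then becomes a finite computation with characters on $M_d$.

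\textbf{Class independence.} Replacing $w$ by $w+i$ with $i\in I\subset C$ leaves $C+w$ unchanged. Replacing $w'$ by $w'+i$ multiplies $\chi_{w'}(c)$ by $\chi_i(c)=e^{2\pi i\frac34\beta(c,i)}$. For $c=c_0+w$ with $c_0\in C$ we have $\beta(c_0,i)=0$ because $i\in C=C^\perp$, and $\beta(w,i)=0$ because $w,i\in C'$ and $C'$ is isotropic. So this factor is $1$ on $C+w$, and $f_{w,w'}$ depends only on the classes in $C'/I$.

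\textbf{$T$-transformation.} On $c=c_0+w$ the phase is $e^{2\pi iQ(c)}$. Expanding gives $q(c)=q(c_0)+q(w)+\beta(c_0,w)$, in which $q(c_0)=0$ and $q(w)=0$ by evenness of $C$ and $C'$. Hence $e^{2\pi i Q(c)}=e^{2\pi i\frac34\beta(c_0,w)}$. Since $\beta(w,w)=2q(w)=0$, this equals $\chi_w(c)$. Therefore $\rho(T)f_{w,w'}=\mathbf 1_{C+w}\chi_w\chi_{w'}=f_{w,w+w'}$, using $\chi_w\chi_{w'}=\chi_{w+w'}$.

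\textbf{$S$-transformation.} I compute $\rho(S)f_{w,w'}(c')=4^{-d}\sum_{c_0\in C}e^{-2\pi i\frac34\beta(c_0+w,c')}e^{2\pi i\frac34\beta(c_0+w,w')}$. The term $\beta(w,w')$ vanishes by isotropy of $C'$, so the sum becomes
\[
4^{-d}\,e^{2\pi i\frac34\beta(w,w'-c')}\sum_{c_0\in C}e^{2\pi i\frac34\beta(c_0,w'-c')} .
\]
Character orthogonality together with $|C|=4^d$ and $C^\perp=C$ shows the inner sum is $4^d\mathbf 1_{C}(c'-w')$. So the result is supported on $C+w'$ with value $e^{-2\pi i\frac34\beta(w,c')}=\chi_{-w}(c')$, which is $f_{w',-w}$.

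The main obstacle is bookkeeping of sign and inverse conventions: deciding whether $\tau\mapsto-1/\tau$ on $\Theta[f]$ corresponds to $\rho(S)$ or $\rho(S)^{-1}$. If it corresponds to the other one, I will use $\rho(S^2)f(c)=f(-c)$ and the fact that $\Theta_c=\Theta_{-c}$ is real. These give $f_{-w',w}$ with $\chi$ conjugated, and that reduces to the same statement $Z_C[w',-w]$. I would first verify the convention on the $T\to M_2$ example \eqref{eq:T-M2-sectors} as a check.
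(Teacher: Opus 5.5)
Your proposal is correct and follows the same route as the paper. Class independence comes from $\beta(C+w,i)=0$, the $T$-law from $q(c_0+w)\equiv\beta(c_0,w)\equiv\beta(c_0+w,w)\pmod 4$, and the $S$-law from the finite Fourier transform, using character orthogonality on $C=C^\perp$ and $\beta(w,w')=0$. Your worry about $\rho(S)$ versus $\rho(S)^{-1}$ is harmless: as you note, $\Theta_c=\Theta_{-c}$ makes the sign of the Fourier kernel irrelevant at the level of $\Theta[f]$.
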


\begin{proof}
This is the label form of the transformation laws in the proof of~\cite[proposition~7.1]{A3math}.
Changing $w$ or $w'$ by $i\in I$ changes neither $C+w$ nor $\chi_{w'}$ on $C+w$, since $\beta(C+w,i)=0$.
Under $\tau\mapsto\tau+1$, the block product with label $x=c+w$ acquires
$e^{2\pi i\,3q(x)/4}$~\cite[equation~(32)]{A3math}, and $q(x)=\beta(c,w)=\beta(x,w)$ because $q$ vanishes
on $C$ and $C'$; hence $f_{w,w'}\mapsto f_{w,w'+w}$. Under $\tau\mapsto-1/\tau$, the finite Fourier
transform of~\cite[equation~(33)]{A3math} sends $f_{w,w'}$ to $f_{w',-w}$, by character orthogonality
on $C$ and $\beta(w,w')=0$.
\end{proof}

\begin{proposition}[{\cite[equation~(42)]{A3math}}]\label{prop:orbifold}
Equation~\eqref{eq:orbifold-sum} holds, and the $H$-invariant states of the $w$-twisted sector have
labels $I+w$.
\end{proposition}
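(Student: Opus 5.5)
The two claims of the statement are that \eqref{eq:orbifold-sum} holds, and that the $H$-invariant states of the $w$-twisted sector have labels $I+w$. I would prove the second claim first, since it implies the first. Fix $w\in C'$. The $w$-twisted sector of $\Lambda(C)$ is the lattice coset with labels $C+w$. The charge of a label $x=c+w$ under $w'\in C'$ is $\chi_{w'}(x)=e^{2\pi i\frac34\beta(c+w,w')}$. Because $C'$ is isotropic, $\beta(w,w')=0$, so this reduces to $\chi_{w'}(c)$.

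Averaging over $w'\in H=C'/I$ gives the projector
\[
\frac1{|H|}\sum_{w'\in H}\chi_{w'}(c)=\mathbf 1\{\beta(c,C')=0\}.
\]
This holds by character orthogonality on $C'/I$. The characters are well defined on $C'/I$ because $\beta(c,I)=0$ for $c\in C$, and they are distinct characters by the perfect pairing of lemma~\ref{lem:dual-quotients}.

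The next step is to identify which $c\in C$ satisfy $\beta(c,C')=0$. Such a $c$ lies in $C\cap C'^{\perp}$. Since $C'$ is self-dual, $C'^{\perp}=C'$, so this set is $C\cap C'=I$. Hence the invariant labels in the $w$-twisted sector are exactly $I+w$, which is the second claim. I would also note that $C+w$ and the invariant set depend only on $w$ modulo $I$, consistent with proposition~\ref{prop:twisted}.

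For the orbifold sum, I would use the definition \eqref{eq:twisted-Z}:
\[
\frac1{|H|}\sum_{w,w'}Z_C[w,w']=\sum_{w\in H}\frac{\Theta[\mathbf 1_{I+w}]}{|\eta|^{6d}}.
\]
Here the sum over $w'$ has been evaluated sector by sector using the projector above. The cosets $I+w$, as $w$ runs over $C'/I$, partition $C'$. Therefore the right-hand side equals $\Theta[\mathbf 1_{C'}]/|\eta|^{6d}=Z_{C'}$, which is \eqref{eq:orbifold-sum}.

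I expect the main obstacle to be bookkeeping rather than anything conceptual. One has to check that $\chi_{w'}$ is a genuine character of $H$ on each twisted coset, that is, that shifting $c$ by $w$ introduces no extra phase. This is exactly where isotropy of $C'$ is used, and it is also where the non-anomalous condition enters. One also has to confirm that the phase conventions in \eqref{eq:character} do not produce a sign that spoils the orthogonality relation. Here evenness $q(w)=0$ guarantees consistency with the $T$-transformation $w'\mapsto w'+w$ of proposition~\ref{prop:twisted}.
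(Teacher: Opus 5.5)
Your proof is correct and follows essentially the paper's argument: character orthogonality over $H$ on the coset $C+w$ selects the labels lying in $C'$, which gives $(C+w)\cap C'=I+w$, and these cosets partition $C'$; you simply make explicit the isotropy step $\beta(w,w')=0$ and the use of $C'^\perp=C'$. The appeal to the perfect pairing of lemma~\ref{lem:dual-quotients} is not needed, since orthogonality for the single character $w'\mapsto\chi_{w'}(c)$ of $H$ only requires that it be well defined on $C'/I$, not that distinct $c$ give distinct characters.
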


\begin{proof}
For $x\in C+w$, character orthogonality gives $|H|^{-1}\sum_{w'}\chi_{w'}(x)=\mathbf 1_{C'}(x)$. The
projected sector is $(C+w)\cap C'=I+w$, and these cosets exhaust $C'$.
\end{proof}
 
\subsection{Reduction modulo two}\label{app:reduction}
 
Let $V=M_2/2M_2\simeq\mathbb F_2^4$ with polar form $\bar\beta$, let $\pi$ be reduction, and put $W_C=\pi(C)$, $C[2]=C\cap2M_2$. By \cite[lemma~4.1, propositions~4.2 and~4.3]{A3math}: $W_C$ is a singular plane for $C\simeq\mathbb Z_4^2$, the line $\langle r\rangle$ for $C=C_r\simeq\mathbb Z_4\oplus\mathbb Z_2^2$, and $0$ for $T$. Each coordinate plane $\langle e(x),e(y)\rangle$ ($x\in\{a_1,b_1\}$, $y\in\{a_2,b_2\}$) carries one code of class $F_0$ and one of class $F_2$. Each diagonal plane $D_+=\langle1001,0110\rangle$, $D_-=\langle1100,0011\rangle$ carries two codes of class $F_1$. The classes $M_0,M_1,M_2$ are the $C_r$ with $r$ of weight one, $r=e(x)+e(y)$ with $x,y$ in different pairs, and $r=1111$.
 
\begin{lemma}\label{lem:torsion}
$C[2]=2W_C^\perp$.
\end{lemma}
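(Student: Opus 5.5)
The plan is to prove the two inclusions $C[2]\subseteq 2W_C^\perp$ and $2W_C^\perp\subseteq C[2]$ separately, and then to upgrade one of them to an equality by comparing orders. Here $W_C^\perp$ is taken with respect to $\bar\beta$ on $V$, and $2W_C^\perp$ denotes the image of $W_C^\perp$ under the well-defined map $V\to 2M_2$, $u\mapsto 2\tilde u$, where $\tilde u$ is any lift of $u$. Multiplication by two identifies $V$ with $2M_2$ as groups, so $2W_C^\perp$ is a subgroup of order $|W_C^\perp|$.

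\textbf{Step 1: $C[2]\subseteq 2W_C^\perp$.} Take $2u\in C\cap 2M_2$ and any $c\in C$. Self-duality of $C$ gives $\beta(2u,c)=2\beta(u,c)\equiv 0\pmod 4$. Hence $\beta(u,c)\equiv 0\pmod 2$, that is, $\bar\beta(\bar u,\pi(c))=0$. Letting $c$ range over $C$ shows $\bar u\in W_C^\perp$. This step uses only isotropy.

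\textbf{Step 2: order count.} The reduction $\pi|_C$ has kernel $C\cap 2M_2=C[2]$, so
\[
|C[2]|=\frac{|C|}{|W_C|}=\frac{16}{2^{\dim W_C}}.
\]
Since $\bar\beta$ is nondegenerate on $V\simeq\F_2^4$, we also have $|2W_C^\perp|=|W_C^\perp|=2^{4-\dim W_C}$. The two orders agree, so the inclusion of Step~1 is an equality.

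The only point needing care is the nondegeneracy of $\bar\beta$. This holds because $\beta$ reduces mod two to the split form $u\cdot v'+v\cdot u'$ on $\F_2^4$, which is a hyperbolic form. Beyond that, the well-definedness of $2W_C^\perp$ is immediate from $2\cdot 2M_2=0$ in $\Z_4$.

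As a consistency check against the cited case list, take the three group types. For $C\simeq\Z_4^2$ the reduction $W_C$ is a singular plane and $C[2]=2W_C$, since $W_C=W_C^\perp$. For $C=C_r$ one gets $C[2]=2\langle r\rangle^\perp$, which has order eight. For $T$ one gets $C[2]=2V=T$. All three match the stated structure of the codes.

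I expect no real obstacle in this argument. The main thing to get right is the order bookkeeping, namely $|C|=4^d=16$ from $C$ being Lagrangian.
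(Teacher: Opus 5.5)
Your proof is correct and follows the paper's argument: isotropy gives $C[2]\subseteq 2W_C^\perp$, and both sides have order $16/|W_C|$, the right-hand side by nondegeneracy of $\bar\beta$. Your opening sentence promises two separate inclusions, but, as in the paper, one inclusion plus the order count is all you actually use, and that suffices.
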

 
\begin{proof}
If $2t\in C$, then $0=\beta(c,2t)\equiv2\bar\beta(c,t)\pmod 4$ for all $c\in C$, so $t\in W_C^\perp$. Both sides have order $16/|W_C|$.
\end{proof}
 
\begin{lemma}[{\cite[lemma~4.1]{A3math}}]\label{lem:rulings}
Identify $(a_1,a_2,b_1,b_2)$ with the matrix $\left(\begin{smallmatrix}a_1&a_2\\ b_2&b_1\end{smallmatrix}\right)\in M_2(\F_2)$,
so that $\bar q$ is the determinant. For a nonzero $u\in\F_2^2$, let $P_u$ be the plane of matrices with
column space in $\langle u\rangle$, and $Q_u$ the plane of matrices with row space in $\langle u\rangle$.
The six totally singular planes of $V$ are
\[
  P_{e_1}=\langle e(a_1),e(a_2)\rangle,\quad P_{e_2}=\langle e(b_1),e(b_2)\rangle,\quad P_{e_1+e_2}=D_+,
\]
\[
  Q_{e_1}=\langle e(a_1),e(b_2)\rangle,\quad Q_{e_2}=\langle e(a_2),e(b_1)\rangle,\quad Q_{e_1+e_2}=D_-.
\]
Two distinct column planes, or two distinct row planes, meet in $0$. A column plane $P_u$ and a row
plane $Q_v$ meet in the line spanned by $uv^{\mathsf T}$.
\end{lemma}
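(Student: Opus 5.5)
The plan is to work entirely in the matrix model. First I will check that the identification turns $\bar q$ into the determinant. The matrix $\left(\begin{smallmatrix}a_1&a_2\\ b_2&b_1\end{smallmatrix}\right)$ has determinant $a_1b_1-a_2b_2\equiv a_1b_1+a_2b_2=a\cdot b=\bar q(a,b)$ in $\F_2$.

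A subspace is totally singular when $\bar q$ vanishes on it. In characteristic two this already forces the polar form to vanish there, since $\bar\beta(x,y)=\bar q(x+y)-\bar q(x)-\bar q(y)$. So totally singular planes are exactly two-dimensional subspaces of $M_2(\F_2)$ consisting of matrices of rank at most one. Every nonzero rank-one matrix is uniquely $uv^{\mathsf T}$ with $u,v\in\F_2^2\setminus\{0\}$. This gives the nine nonzero singular vectors quoted earlier, which serves as a consistency check.

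The key step is classifying such planes. Over $\F_2$ a plane is $\{0,A,B,A+B\}$ with $A=uv^{\mathsf T}$ and $B=u'v'^{\mathsf T}$ distinct. I will show that $A+B$ has rank at most one iff $u=u'$ or $v=v'$.
\begin{itemize}
\item If $u\neq u'$ and $v\neq v'$, then $u,u'$ form a basis and so do $v,v'$. Hence $A+B=(u\ u')(v\ v')^{\mathsf T}$ is a product of invertible matrices and has rank two.
\item If $u=u'$, then $A+B=u(v+v')^{\mathsf T}$. The plane is then $\{uw^{\mathsf T}\}=P_u$, which is two-dimensional since $w$ ranges over $\F_2^2$.
\item Symmetrically, $v=v'$ gives $Q_v$.
\end{itemize}
This yields exactly the three column planes and three row planes, six in total.

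Next I identify them in coordinates.
\begin{itemize}
\item $P_{e_1}$ has only the first row nonzero, so it is spanned by $e(a_1),e(a_2)$. Likewise $P_{e_2}$ is spanned by $e(b_2),e(b_1)$.
\item $P_{e_1+e_2}$ has equal rows, $a_1=b_2$ and $a_2=b_1$, so it is spanned by $1001,0110$, which is $D_+$.
\item $Q_{e_1}$ has only the first column nonzero, giving $a_1,b_2$, and $Q_{e_2}$ gives $a_2,b_1$.
\item $Q_{e_1+e_2}$ has equal columns, $a_1=a_2$ and $b_1=b_2$, so it is spanned by $1100,0011$, which is $D_-$.
\end{itemize}

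Finally, the intersections. A nonzero element of $P_u\cap P_{u'}$ has column space equal to both $\langle u\rangle$ and $\langle u'\rangle$, which forces $u=u'$. The same argument applies to two row planes.

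For $P_u\cap Q_v$, an element $uw^{\mathsf T}$ with $w\neq0$ has row space $\langle w\rangle$. It lies in $Q_v$ exactly when $w=v$, so the intersection is the line spanned by $uv^{\mathsf T}$.

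The only real content is the rank-two computation in the classification step. Everything else is bookkeeping of coordinates, where the main risk is mismatching the ordering $(a_1,a_2,b_1,b_2)$ with the matrix positions, notably $b_2$ sitting in the $(2,1)$ entry.
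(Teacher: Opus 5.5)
Your proof is correct, and for the intersection claims it is the paper's argument: a nonzero element of $P_u\cap P_{u'}$ has column line $\langle u\rangle=\langle u'\rangle$, and an element of $P_u\cap Q_v$ is forced to be $uv^{\mathsf T}$. The one difference is that the paper imports the description of singular vectors as $uv^{\mathsf T}$ and the classification of the six totally singular planes from lemma~4.1 of the companion paper, whereas you prove both directly via $uv^{\mathsf T}+u'v'^{\mathsf T}=(u\ u')(v\ v')^{\mathsf T}$ and also check the coordinate identifications, which makes your version self-contained (the six planes being pairwise distinct follows from your final intersection computation, so ``six in total'' should be read after it).
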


\begin{proof}
By~\cite[lemma~4.1]{A3math}, a nonzero singular vector is a rank-one matrix $uv^{\mathsf T}$ with $u,v$
uniquely determined, and the totally singular planes are the matrices with a fixed column line or a
fixed row line, three of each. A nonzero matrix in $P_u\cap P_{u'}$ would have column line
$\langle u\rangle=\langle u'\rangle$, so $u=u'$; the same holds for row planes. A nonzero matrix in
$P_u\cap Q_v$ has column line $\langle u\rangle$ and row line $\langle v\rangle$, hence equals $uv^{\mathsf T}$.
\end{proof}

\begin{proposition}\label{prop:quotient-order}
Let $C\neq C'$ be even self-dual codes, and put
\[
  I=C\cap C',\qquad G=C'/I,\qquad U=W_C,\qquad V=W_{C'},
\]
\[
  \omega=\dim U,\qquad \omega'=\dim V,\qquad k=\dim(U\cap V),\qquad \ell=\dim\pi(I),
\]
with $|G|=2^m$ and $\rho$ the rank of $\bar\beta$ restricted to $U\times V$. All dimensions and orthogonal
complements are taken over $\F_2$, with respect to $\bar\beta$.
\begin{enumerate}
  \item $|I|=2^{\,4-\omega-\omega'+k+\ell}$ and $|G|=2^{\,\omega+\omega'-k-\ell}$.
  \item If $U\neq V$, then $\ell=k$, so that $|G|=2^{\,\omega+\omega'-2k}$. If $U=V$, then both codes are free,
  $k=2$, $\ell=0$ and $|G|=4$.
  \item $G\simeq\Z_4^{\,\rho}\oplus\Z_2^{\,m-2\rho}$. In particular, when $|G|=4$, $G$ is cyclic exactly when
  $\bar\beta(U,V)\neq0$, and $G\simeq\Z_2\times\Z_2$ otherwise.
\end{enumerate}
\end{proposition}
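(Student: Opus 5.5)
The plan is to describe $I$ through the exact sequence $0\to I[2]\to I\to\pi(I)\to0$, where $I[2]=I\cap 2M_2$. Then $|I|=|I\cap2M_2|\cdot 2^{\ell}$, and it remains to compute the torsion part.

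By lemma~\ref{lem:torsion}, $C[2]=2U^\perp$ and $C'[2]=2V^\perp$. Hence
\[
I[2]=2(U^\perp\cap V^\perp)=2(U+V)^\perp,
\]
which has order $2^{4-(\omega+\omega'-k)}$. This gives $|I|=2^{4-\omega-\omega'+k+\ell}$. Since $|C'|=16$, we get $|G|=2^{\omega+\omega'-k-\ell}$, which is item~1.

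For item~2, the inclusion $\pi(I)\subseteq U\cap V$ is automatic, so $\ell\le k$. The reverse inclusion is the main obstacle. Given $u\in U\cap V$, we must find a common lift in $C\cap C'$. Take lifts $c\in C$ and $c'\in C'$ of $u$. Their difference lies in $2M_2$, say $c-c'=2t$. We want to correct it by elements of $C[2]+C'[2]=2(U^\perp+V^\perp)=2(U\cap V)^\perp$, i.e.\ to show $t\in (U\cap V)^\perp$ modulo $2M_2$.

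We exploit isotropy. For $x\in C\cap C'$ we have $\beta(2t,x)=\beta(c,x)-\beta(c',x)=0$, so $t\perp\pi(I)$. That alone is circular. Instead we use the explicit classification of lemma~\ref{lem:rulings}. When $U\ne V$, either one of $U,V$ is a line or zero, in which case $k\le1$, or they are two distinct singular planes meeting in $0$ or in a line $\langle uv^{\mathsf T}\rangle$. So only $k\le1$ occurs, and we only need to lift a single vector $u$. For $u\in U\cap V$ singular we compute $\bar\beta(t,u)$ via $q$: evenness gives $q(c)=q(c')=0$, and $q(c'+2t)=q(c')+2\beta(c',t)\equiv 2\bar\beta(u,t)$ mod 4, so $\bar\beta(t,u)=0$. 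If $k=1$ then $(U\cap V)^\perp=u^\perp$, so $t$ lies in the correction group and we can adjust $c'$ to $c$. Hence $\ell=k$.

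If $U=V$, the plane case forces both codes to be free. Lemma~\ref{lem:torsion} then gives $C[2]=C'[2]=2U$, and $C'$ is determined by a lift of $U$, each lift differing by $\mathrm{Hom}$ data. We will show $\pi(I)=0$ when $C\ne C'$: if a nonzero $u$ lifted to both codes, then together with the shared torsion $2U$, the two codes would share an index-2 subgroup. A short count over the codes lying on a plane (two per plane, by the description preceding lemma~\ref{lem:torsion}) then shows $\ell=0$, hence $|G|=4$.

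For item~3 we use lemma~\ref{lem:dual-quotients}, which says $G\simeq\widehat{C/I}$. The number of $\Z_4$ factors equals $\dim$ of $2G$ as an $\F_2$-space, which is $\log_2|2C'/(2C'\cap I)|$. We have $2C'=2V$, viewed as a subspace of $2M_2$. An element $2v$ lies in $I$ iff $2v\in C$, i.e.\ iff $v\in U^\perp$. Thus $2G\simeq V/(V\cap U^\perp)$, whose dimension is the rank $\rho$ of $\bar\beta$ on $U\times V$. Together with $|G|=2^m$ this gives $G\simeq\Z_4^{\rho}\oplus\Z_2^{m-2\rho}$. The case $|G|=4$ is immediate from this.
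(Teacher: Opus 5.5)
Your items~1 and~3, and the case $U\neq V$ of item~2, are correct and follow the paper's argument. You use $C[2]=2U^\perp$ for the torsion. You lift the single common vector, using evenness to get $t\in u^\perp=U^\perp+V^\perp$. And you use $2G\simeq V/(V\cap U^\perp)$.

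The gap is in the case $U=V$. A minor point first: you should say why $U=V$ forces a plane. $T$ is the only code with zero reduction, and a mixed code is determined by its line $\langle r\rangle$. So $C\neq C'$ with $U=V$ leaves only two distinct free codes over one singular plane, with $U^\perp=U$.

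The real problem is your step ``the two codes would share an index-2 subgroup; a short count over the codes lying on a plane (two per plane) then shows $\ell=0$''. This does not establish anything. Sharing an index-two subgroup is not contradictory in itself: every $\Z_2$ edge of the web is such a pair of Lagrangians. Knowing that exactly two free codes lie over $U$ does not prevent both from containing the same coset $c+2U$ of lifts of $u$. A count would work only if you also knew that each of the eight even lifts of $u$ lies in some free code over $U$; then $4+4=8$ would force disjointness. But that existence statement needs the same computation you are trying to avoid.

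The missing idea is the isotropy argument you set aside as circular. Here it is not circular, because the common lift is the hypothesis being refuted. Suppose $c\in I$ lifts a nonzero $u\in U$. Complete $u$ to a basis $u,u'$ of $U$, and take lifts $d\in C$, $d'\in C'$ of $u'$, with $d'=d+2t$.
\begin{itemize}
\item Isotropy, $\beta(c,d)=\beta(c,d')=0$, gives $\bar\beta(u,t)=0$.
\item Evenness of $d$ and $d'$ gives $\bar\beta(u',t)=0$ (your own computation $q(d+2t)\equiv q(d)+2\bar\beta(u',t)$).
\end{itemize}
Hence $t\in U^\perp=U$, so $2t\in 2U=C[2]$ and $d'\in C$. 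Then $C'=\langle c,d'\rangle+2U\subseteq C$, contradicting $C\neq C'$. Therefore $\ell=0$, and item~1 gives $|G|=4$.
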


\begin{proof}
By lemma~\ref{lem:torsion}, which is the torsion calculation
of~\cite[section~4.2]{A3math}, $C\cap2M_2=2U^\perp$ and $C'\cap2M_2=2V^\perp$. Hence
$I\cap2M_2=2(U^\perp\cap V^\perp)=2(U+V)^\perp$, of dimension $4-\omega-\omega'+k$, and reduction on $I$ gives
(1).

\emph{The case $U\neq V$.} Since $\pi(I)\subset U\cap V$, only $k=1$ needs proof. Write
$U\cap V=\langle w\rangle$ and choose lifts $c\in C$, $c'\in C'$ of $w$, with $c'-c=2t$. Evenness gives
$\bar\beta(w,t)=0$, so $t\in w^\perp=U^\perp+V^\perp$. Writing $t=u+v$ in this sum produces the common lift
$c+2u=c'-2v$. Thus $\ell=k$.

If $U=V$, the two codes cannot be mixed, since mixed codes are uniquely determined
by their reduction~\cite[proposition~4.2]{A3math}, and they are not both $2M_2$. So they are distinct free
codes over the same plane, and $k=2$. Distinct free lifts of one plane have no common nonzero residue.
Indeed, suppose $w\in U$ is nonzero with a common lift $c\in I$, and complete it to a basis $w,u$ of $U$
with lifts $d\in C$, $d'\in C'$ of $u$, so that $d'=d+2t$. Isotropy with $c$ gives $\bar\beta(w,t)=0$, and
evenness of $d$ and $d'$ gives $\bar\beta(u,t)=0$. Hence $t\in U^\perp$ and $d'\in d+2U^\perp\subset C$. Since
$C'$ is generated by $c$, $d'$ and $C'\cap2M_2=2U^\perp\subset C$, this forces $C'=C$. Therefore $\ell=0$ and
$|G|=4$. For example,
\[
  C=\langle\w{0001},\w{0010}\rangle,\qquad C'=\langle\w{0210},\w{2001}\rangle
\]
have $U=V=\langle\w{0001},\w{0010}\rangle$, but $I=\langle\w{0002},\w{0020}\rangle$ and $G\simeq\Z_2^2$, whereas
the formula $2^{\,\omega+\omega'-2k}$ would give one.

Doubling identifies $2C'$ with $C'/(C'\cap2M_2)\simeq V$, and
$2C'\cap I=2C'\cap(C\cap2M_2)$ corresponds to $V\cap U^\perp$. Hence
\[
  2G\simeq\frac{2C'}{2C'\cap I}\simeq\frac{V}{V\cap U^\perp},\qquad |2G|=2^{\rho}.
\]
Since $G$ has exponent dividing four, its elementary divisors are $G\simeq\Z_4^{\,\rho}\oplus\Z_2^{\,m-2\rho}$.
When $|G|=4$, $\rho=0$ exactly when $\bar\beta(U,V)=0$.
\end{proof}

For distinct mixed codes $C_r$, $C_{r'}$, the order is always four, and the criterion is simply
$\bar\beta(r,r')=1$ for $\Z_4$, or zero for $\Z_2^2$. For a free code with reduction $U$ and a mixed code with
reduction $\langle r\rangle$, one obtains
\[
  G\simeq\begin{cases}\Z_2, & r\in U,\\ \Z_4\oplus\Z_2, & r\notin U.\end{cases}
\]
For two free codes with different reductions, $G\simeq\Z_4$ when $k=1$ and $G\simeq\Z_4^2$ when $k=0$.
 
\subsection{Proof of the web}\label{app:web-proof}
 
We go through the reductions using proposition~\ref{prop:quotient-order}. Recall that $\bar\beta(e(x),e(y))=1$ iff $x,y$ are the two coordinates of one pair, and that $\bar\beta(s,1111)$ is the weight of $s$ modulo two.
 
\textbf{Free-free} ($\omega=\omega'=2$). Distinct planes in one ruling give order $16$. Planes in different rulings give order $4$, and the quotient is $\mathbb Z_4$, since $W_C\perp W_{C'}$ would force $W_{C'}\subset W_C^\perp=W_C$. Each ruling contains two coordinate planes and one diagonal plane, so this gives the $\mathbb Z_4$ connections $F_0F_1$, $F_0F_2$, $F_1F_2$ and the $\mathbb Z_4$ self-connections of $F_0$, $F_1$, $F_2$. Distinct codes over one plane give $\mathbb Z_2^2$: over a coordinate plane this is $F_0F_2$, over a diagonal plane the self-connection of $F_1$.
 
\textbf{Free-mixed} ($\omega=2$, $\omega'=1$). The quotient is $\mathbb Z_2$ if $r\in W_C$ and of order $8$ otherwise. A coordinate plane $\langle e(x),e(y)\rangle$ contains $e(x)$, $e(y)$ and $e(x)+e(y)$, but not $1111$. A diagonal plane contains two $M_1$-vectors and $1111$, but no weight-one vector. This gives the $\mathbb Z_2$ connections $F_0M_0$, $F_2M_0$, $F_0M_1$, $F_2M_1$, $F_1M_1$, $F_1M_2$, and order $8$ for every pair in $F_1M_0$, $F_0M_2$, $F_2M_2$.
 
\textbf{Free-$T$} ($\omega'=0$). Order $4$ with $W_C\perp0$: $\mathbb Z_2^2$ for $F_0$, $F_1$, $F_2$.
 
\textbf{Mixed-mixed.} Distinct mixed codes have $r\neq r'$, so $k=0$ and the order is $4$, with $\mathbb Z_2^2$ iff $\bar\beta(r,r')=0$. For $M_0M_0$, $\bar\beta(e(x),e(y))$ takes both values. For $M_0M_1$, $\bar\beta(e(x),e(y)+e(y'))=1$ iff the partner of $x$ is $y$ or $y'$, and both cases occur. For $M_1M_1$, $\bar\beta(e(x)+e(y),e(x')+e(y'))=[x\neq x']+[y\neq y']$ takes both values. For $M_0M_2$, $\bar\beta(e(x),1111)=1$, giving only $\mathbb Z_4$. For $M_1M_2$, $\bar\beta(r,1111)=0$, giving only $\mathbb Z_2^2$.
 
\textbf{Mixed-$T$} ($\omega=1$, $\omega'=0$): $\mathbb Z_2$ for $M_0$, $M_1$, $M_2$.
 
The classes $M_2$ and $T$ are single codes and have no self-connections.
 
\subsection{Representative pairs}
\label{app:pairs}

Table~\ref{tab:pairs} gives an explicit pair of codes $(C_X,\, g\,C_Y)$
for every connection and self-connection of table~\ref{tab:web}. Here
$C_X$ is the representative of class $X$ in \cite[table~1]{A3math}, and
$g$ is a block-monomial move applied to the representative $C_Y$; the
column ``generators of $C'$'' lists generators of $C' = g\,C_Y$. These are
the pairs used in table~\ref{tab:spectral}. The last column
gives $\beta(c,w)$ on the labels $c=(a_1,a_2,b_1,b_2)$ of $\Lambda(C)$, for
generators $w$ of $H = C'/I$; the gauged symmetry acts by
$\chi_w(c) = i^{3\beta(c,w)}$.
\newpage
\begin{table}[htbp]
\centering
\small
\renewcommand{\arraystretch}{1.12}
\begin{tabular}{lllll}
\hline
Pair & $H$ & $g$ & generators of $C' = g\,C_Y$ & generators of $I$ \quad\ $\beta(c,w)$ \\
\hline
$\mathrm F_0\mathrm F_0$ & $\mathbb Z_4$ & $\varepsilon_1$ & $0001,\ 1000$ & $0001$ \quad $w=1000$: $b_1$ \\
$\mathrm F_1\mathrm F_1$ & $\mathbb Z_4$ & $\varepsilon_1$ & $0130,\ 1001$ & $1311$ \quad $w=0130$: $3a_1+b_2$ \\
$\mathrm F_1\mathrm F_1$ & $\mathbb Z_2^2$ & $\nu_1$ & $0013,\ 1100$ & $0022,\ 2200$ \quad $0013$: $a_1+3a_2$;\ $1100$: $b_1+b_2$ \\
$\mathrm F_2\mathrm F_2$ & $\mathbb Z_4$ & $\varepsilon_1$ & $0120,\ 1002$ & $2120$ \quad $w=1002$: $2a_2+b_1$ \\
$\mathrm M_0\mathrm M_0$ & $\mathbb Z_4$ & $\varepsilon_2$ & $0100,\ 0020,\ 2000$ & $0020,\ 2000$ \quad $w=0100$: $b_2$ \\
$\mathrm M_0\mathrm M_0$ & $\mathbb Z_2^2$ & $\sigma$ & $0010,\ 0002,\ 0200$ & $0002,\ 0020$ \quad $0010$: $a_1$;\ $0200$: $2b_2$ \\
$\mathrm M_1\mathrm M_1$ & $\mathbb Z_4$ & $\varepsilon_1$ & $1001,\ 0002,\ 0220$ & $0002,\ 2220$ \quad $w=1001$: $a_2+b_1$ \\
$\mathrm M_1\mathrm M_1$ & $\mathbb Z_2^2$ & $\varepsilon_1\varepsilon_2$ & $1100,\ 0200,\ 0022$ & $0022,\ 2200$ \quad $0200$: $2b_2$;\ $1100$: $b_1+b_2$ \\
\hline
$\mathrm F_0\mathrm M_0$ & $\mathbb Z_2$ & $1$ & $0001,\ 0020,\ 2000$ & $0001,\ 0020$ \quad $w=2000$: $2b_1$ \\
$\mathrm F_0\mathrm M_1$ & $\mathbb Z_2$ & $1$ & $0011,\ 0002,\ 2200$ & $0011,\ 0002$ \quad $w=2200$: $2b_1+2b_2$ \\
$\mathrm F_1\mathrm M_1$ & $\mathbb Z_2$ & $1$ & $0011,\ 0002,\ 2200$ & $0011,\ 2200$ \quad $w=0002$: $2a_2$ \\
$\mathrm F_1\mathrm M_2$ & $\mathbb Z_2$ & $1$ & $1113,\ 0022,\ 0202$ & $1311,\ 0022$ \quad $w=0202$: $2a_2+2b_2$ \\
$\mathrm F_2\mathrm M_0$ & $\mathbb Z_2$ & $\sigma$ & $0010,\ 0002,\ 0200$ & $0012,\ 0200$ \quad $w=0002$: $2a_2$ \\
$\mathrm F_2\mathrm M_1$ & $\mathbb Z_2$ & $\varepsilon_2$ & $0110,\ 0020,\ 2002$ & $2112,\ 0020$ \quad $w=0110$: $a_1+b_2$ \\
$\mathrm M_0\mathrm T$ & $\mathbb Z_2$ & $1$ & $\mathrm T$ & $0002,\ 0020,\ 2000$ \quad $w=0200$: $2b_2$ \\
$\mathrm M_1\mathrm T$ & $\mathbb Z_2$ & $1$ & $\mathrm T$ & $0002,\ 0020,\ 2200$ \quad $w=0200$: $2b_2$ \\
$\mathrm M_2\mathrm T$ & $\mathbb Z_2$ & $1$ & $\mathrm T$ & $0022,\ 0202,\ 2002$ \quad $w=0002$: $2a_2$ \\
\hline
$\mathrm F_0\mathrm F_1$ & $\mathbb Z_4$ & $1$ & $0011,\ 1300$ & $0011$ \quad $w=1300$: $b_1+3b_2$ \\
$\mathrm F_0\mathrm F_2$ & $\mathbb Z_4$ & $1$ & $0012,\ 2100$ & $0012$ \quad $w=2100$: $2b_1+b_2$ \\
$\mathrm F_1\mathrm F_2$ & $\mathbb Z_4$ & $\varepsilon_1$ & $0120,\ 1002$ & $1322$ \quad $w=0120$: $2a_1+b_2$ \\
$\mathrm M_0\mathrm M_1$ & $\mathbb Z_4$ & $\varepsilon_2$ & $0110,\ 0020,\ 2002$ & $0020,\ 2002$ \quad $w=0110$: $a_1+b_2$ \\
$\mathrm M_0\mathrm M_2$ & $\mathbb Z_4$ & $1$ & $1113,\ 0022,\ 0202$ & $0022,\ 2002$ \quad $w=1113$: $a_1+3a_2+b_1+b_2$ \\
\hline
$\mathrm F_0\mathrm F_2$ & $\mathbb Z_2^2$ & $\varepsilon_2$ & $0210,\ 2001$ & $0002,\ 0020$ \quad $0210$: $a_1+2b_2$;\ $2001$: $a_2+2b_1$ \\
$\mathrm F_0\mathrm T$ & $\mathbb Z_2^2$ & $1$ & $\mathrm T$ & $0002,\ 0020$ \quad $0200$: $2b_2$;\ $2000$: $2b_1$ \\
$\mathrm F_1\mathrm T$ & $\mathbb Z_2^2$ & $1$ & $\mathrm T$ & $0022,\ 2200$ \quad $0002$: $2a_2$;\ $0200$: $2b_2$ \\
$\mathrm F_2\mathrm T$ & $\mathbb Z_2^2$ & $1$ & $\mathrm T$ & $0020,\ 0200$ \quad $0002$: $2a_2$;\ $2000$: $2b_1$ \\
$\mathrm M_0\mathrm M_1$ & $\mathbb Z_2^2$ & $1$ & $0011,\ 0002,\ 2200$ & $0002,\ 0020$ \quad $0011$: $a_1+a_2$;\ $2200$: $2b_1+2b_2$ \\
$\mathrm M_1\mathrm M_2$ & $\mathbb Z_2^2$ & $1$ & $1113,\ 0022,\ 0202$ & $0022,\ 2200$ \quad $0202$: $2a_2+2b_2$;\ $1113$: $a_1+3a_2+b_1+b_2$ \\
\hline
\end{tabular}
\caption{One explicit pair $(C_X,\, g\,C_Y)$ for every connection and
self-connection of table~\ref{tab:web}. Here $g$ is the block-monomial move applied to the representative $C_Y$:
$\sigma$ swaps the two pairs ($a_1\leftrightarrow a_2$,
$b_1\leftrightarrow b_2$), $\varepsilon_i$ swaps $a_i\leftrightarrow b_i$,
$\nu_i$ flips the sign of $(a_i,b_i)$, and $g=1$ means no move; each is a
T-duality, so $g\,C_Y$ is a code in the same class as $C_Y$.}
\label{tab:pairs}
\end{table}

\section{Proof of theorem~\ref{thm:SW}}\label{app:handlebody}
The group $\mathrm{SL}_2(\Z_4)$ has order $48$, and the image of $\Gamma_\infty$ has order $8$. Six
representatives of the right cosets are $1,S,TS,T^2S,STS,ST^2S$. Since
$\rho(T)e_0=\rho(-I)e_0=e_0$, these representatives give the stated handlebody sum. Put
$\zeta(c)=e^{2\pi iQ(c)}$. The Weil formula gives
\[
  \rho(S)e_0=4^{-d}\sum_ce_c,\qquad \rho(TS)e_0=4^{-d}\sum_c\zeta(c)e_c,
\]
\[
  \rho(T^2S)e_0=4^{-d}\sum_c\zeta(c)^2e_c,\qquad \rho(STS)e_0=4^{-d}\sum_c\zeta(c)^{-1}e_c.
\]
For the fourth formula, one completes the square in the finite Fourier transform and uses
$\sum_{x\in M_d}e^{2\pi iQ(x)}=4^d$. As $\zeta(c)^4=1$, the four displayed vectors add up to
$4^{1-d}\sum_{Q(c)=0}e_c$.

The remaining Fourier transform is
\[
  \rho(ST^2S)e_0=4^{-2d}\sum_{c\in M_d}\Big(\sum_{x\in M_d}(-1)^{x_a\cdot x_b}e^{-2\pi iB(c,x)}\Big)e_c .
\]
It factorizes into $d$ two-dimensional sums. For one hyperbolic pair, the inner sum vanishes unless both
coordinates of $c$ are even; if $c=(2u,2v)$ it is $8(-1)^{uv}$. Consequently,
\[
  \rho(ST^2S)e_0=2^{-d}\sum_{v\in V}(-1)^{\bar q(v)}e_{2v}.
\]
Adding $e_0$ proves~\eqref{eq:SW-basis}.

To obtain~\eqref{eq:SW}, let $C\in\mathcal F_d$ and $U=C\bmod 2\subset V$. Since $C$ is free, $U$ is a maximal
totally singular $d$-space for $\bar q$, and $C[2]=2U$. Conversely, every maximal totally singular $U$ has exactly $2^{\binom d2}$ free lifts. Indeed,
after fixing one isotropic lift and a dual hyperbolic basis, the possible lifts are parametrized by matrices
$(\lambda_{ij})\in M_d(\F_2)$ satisfying $\lambda_{ii}=0$ and $\lambda_{ij}=\lambda_{ji}$. There are
$2^{\binom d2}$ such matrices. 

The number of nonzero singular vectors in $V$ is $(2^{d-1}+1)(2^d-1)$, while each maximal totally singular
$d$-space contains $2^d-1$ nonzero vectors. A fixed nonzero singular $v\in V$ therefore lies in the fraction
$(2^{d-1}+1)^{-1}$ of the maximal totally singular spaces. Equivalently,
\begin{equation}\label{eq:frac-singular}
  \frac{\#\{C\in\mathcal F_d:\ 2v\in C\}}{|\mathcal F_d|}=\frac1{2^{d-1}+1}\qquad(v\neq0,\ \bar q(v)=0).
\end{equation}
No free code contains $2v$ for $\bar q(v)=1$, since its $2$-torsion is $2U$ with $U$ totally singular.

Now fix a primitive $c\in M_d$ with $Q(c)=0$, and set $v=c\bmod 2\neq0$. For each maximal totally singular $U$
containing $v$, precisely $2^{\binom{d-1}2}$ of its $2^{\binom d2}$ free lifts contain $c$: specifying $c$
fixes the $d-1$ entries in one row of the alternating lift matrix. Therefore
\begin{equation}\label{eq:frac-primitive}
  \frac{\#\{C\in\mathcal F_d:\ c\in C\}}{|\mathcal F_d|}=\frac1{2^{d-1}(2^{d-1}+1)}.
\end{equation}
At $c=0$, the coefficient of $A_d$ is $1$; when $Q(c)\neq0$ it is zero. Equations~\eqref{eq:frac-singular}
and~\eqref{eq:frac-primitive} therefore determine every coefficient of $A_d$. Comparing them
with~\eqref{eq:SW-basis} proves~\eqref{eq:SW}.



\begin{thebibliography}{99}


\bibitem{A3math}
R.~Sammani and A.~Sebbar, \emph{The $A_3$ Narain family over $\Z_4$},
\href{https://arxiv.org/abs/2610.26610}{\texttt{arXiv:2610.26610}}
\bibitem{DS21}
A.~Dymarsky and A.~Shapere, \emph{Quantum stabilizer codes, lattices, and CFTs}, \href{https://doi.org/10.1007/JHEP03(2021)160}{\emph{JHEP} \textbf{03} (2021) 160} [\href{https://arxiv.org/abs/2009.01244}{\texttt{arXiv:2009.01244}}].

\bibitem{Alam23}
Y.~F.~Alam, K.~Kawabata, T.~Nishioka, T.~Okuda and S.~Yahagi, \emph{Narain CFTs from nonbinary stabilizer codes}, \href{https://doi.org/10.1007/JHEP12(2023)127}{\emph{JHEP} \textbf{12} (2023) 127} [\href{https://arxiv.org/abs/2307.10581}{\texttt{arXiv:2307.10581}}].

\bibitem{MO25}
S.~Mizoguchi and T.~Oikawa, \emph{Unifying error-correcting code/Narain CFT correspondences via lattices over integers of cyclotomic fields}, \emph{Phys.\ Lett.\ B} \textbf{862} (2025) 139308 [\href{https://arxiv.org/abs/2410.12488}{\texttt{arXiv:2410.12488}}].

\bibitem{SS26}
E.~H.~Saidi and R.~Sammani, \emph{Algebraic constructions of code lattices in Narain CFTs}, \emph{J.\ Math.\ Phys.} \textbf{67} (2026) 032301 [\href{https://arxiv.org/abs/2605.03643}{\texttt{arXiv:2605.03643}}].

\bibitem{Ang25}
N.~Angelinos, \emph{Code construction and ensemble holography of simply-laced WZW models at level~1}, \href{https://doi.org/10.1007/JHEP06(2025)002}{\emph{JHEP} \textbf{06} (2025) 002} [\href{https://arxiv.org/abs/2503.04055}{\texttt{arXiv:2503.04055}}].

\bibitem{Kawabata:2023iss}
K.~Kawabata, T.~Nishioka and T.~Okuda,
\emph{Narain CFTs from quantum codes and their $\mathbb{Z}_2$ gauging},
\href{https://doi.org/10.1007/JHEP05(2024)133}{\emph{JHEP} \textbf{05} (2024) 133}
[\href{https://arxiv.org/abs/2308.01579}{\texttt{arXiv:2308.01579}}].

\bibitem{AKN26}
K.~Ando, K.~Kawabata and T.~Nishioka, \emph{Gauging $\mathbb{Z}_N$ symmetries of Narain CFTs}, \href{https://doi.org/10.1007/JHEP02(2026)058}{\emph{JHEP} \textbf{02} (2026) 058} [\href{https://arxiv.org/abs/2503.21524}{\texttt{arXiv:2503.21524}}].

\bibitem{IMP26}
D.~Isra\"el, I.~V.~Melnikov and Y.~Proto, \emph{Shift orbifolds, decompactification limits, and lattices}, \href{https://doi.org/10.21468/SciPostPhys.20.2.039}{\emph{SciPost Phys.} \textbf{20} (2026) 039} [\href{https://arxiv.org/abs/2502.18453}{\texttt{arXiv:2502.18453}}].

\bibitem{MW20}
A.~Maloney and E.~Witten, \emph{Averaging over Narain moduli space}, \href{https://doi.org/10.1007/JHEP10(2020)187}{\emph{JHEP} \textbf{10} (2020) 187} [\href{https://arxiv.org/abs/2006.04855}{\texttt{arXiv:2006.04855}}].

\bibitem{AJCHT21}
N.~Afkhami-Jeddi, H.~Cohn, T.~Hartman and A.~Tajdini, \emph{Free partition functions and an averaged holographic duality}, \href{https://doi.org/10.1007/JHEP01(2021)130}{\emph{JHEP} \textbf{01} (2021) 130} [\href{https://arxiv.org/abs/2006.04839}{\texttt{arXiv:2006.04839}}].

\bibitem{Rae21}
J.~Raeymaekers, \emph{A note on ensemble holography for rational tori}, \href{https://doi.org/10.1007/JHEP12(2021)177}{\emph{JHEP} \textbf{12} (2021) 177} [\href{https://arxiv.org/abs/2110.08833}{\texttt{arXiv:2110.08833}}].

\bibitem{ADS23}
O.~Aharony, A.~Dymarsky and A.~D.~Shapere, \emph{Holographic description of Narain CFTs and their code-based ensembles}, \href{https://doi.org/10.1007/JHEP05(2024)343}{\emph{JHEP} \textbf{05} (2024) 343} [\href{https://arxiv.org/abs/2310.06012}{\texttt{arXiv:2310.06012}}].

\bibitem{Barbar:2023ncl}
A.~Barbar, A.~Dymarsky and A.D.~Shapere,
\emph{Global symmetries, code ensembles, and sums over geometries},
\href{https://doi.org/10.1103/PhysRevLett.134.151603}{\emph{Phys. Rev. Lett.} \textbf{134} (2025) 151603}
[\href{https://arxiv.org/abs/2310.13044}{\texttt{arXiv:2310.13044}}].

\bibitem{Dymarsky}
A.~Dymarsky and A.~Shapere,
\emph{TQFT gravity and ensemble holography},
\href{https://doi.org/10.1007/JHEP02(2025)091}{\emph{JHEP} \textbf{02} (2025) 091}
[\href{https://arxiv.org/abs/2405.20366}{\texttt{arXiv:2405.20366}}].

\bibitem{DHM25}
A.~Dymarsky, J.~Henriksson and B.~McPeak, \emph{Holographic duality from Howe duality: Chern--Simons gravity as an ensemble of code CFTs}, \href{https://doi.org/10.1007/JHEP02(2026)257}{\emph{JHEP} \textbf{02} (2026) 257} [\href{https://arxiv.org/abs/2504.08724}{\texttt{arXiv:2504.08724}}].

\bibitem{Bar25}
A.~Barbar, \emph{Automorphism-weighted ensembles from TQFT gravity}, [\href{https://arxiv.org/abs/2511.04311}{\texttt{arXiv:2511.04311}}].

\bibitem{Angelinos:2025tqft}
N.~Angelinos,
\emph{Abelian 3D TQFT gravity, ensemble holography and stabilizer states},
\href{https://arxiv.org/abs/2509.26052}{\texttt{arXiv:2509.26052}}.


\bibitem{Saidi:2024zam}
E.H.~Saidi and R.~Sammani,
\emph{Algebraic realisation of the Zamolodchikov metric in Narain theories},
\href{https://doi.org/10.1002/prop.70098}{\emph{Fortsch. Phys.} (2026)}
[\href{https://arxiv.org/abs/2412.13932}{\texttt{arXiv:2412.13932}}].

\bibitem{Saidi:2026code}
E.H.~Saidi and R.~Sammani,
\emph{Code CFTs and topological matter},
\href{https://doi.org/10.1007/JHEP02(2026)007}{\emph{JHEP} \textbf{02} (2026) 007}
[\href{https://arxiv.org/abs/2506.22088}{\texttt{arXiv:2506.22088}}].


\bibitem{KS11}
A.~Kapustin and N.~Saulina, \emph{Topological boundary conditions in abelian Chern--Simons theory}, \emph{Nucl.\ Phys.\ B} \textbf{845} (2011) 393 [\href{https://arxiv.org/abs/1008.0654}{\texttt{arXiv:1008.0654}}].

\bibitem{GK21}
D.~Gaiotto and J.~Kulp, \emph{Orbifold groupoids}, \href{https://doi.org/10.1007/JHEP02(2021)132}{\emph{JHEP} \textbf{02} (2021) 132} [\href{https://arxiv.org/abs/2008.05960}{\texttt{arXiv:2008.05960}}].

\bibitem{KOZ22}
J.~Kaidi, K.~Ohmori and Y.~Zheng, \emph{Symmetry TFTs for non-invertible defects}, [\href{https://arxiv.org/abs/2209.11062}{\texttt{arXiv:2209.11062}}].

\bibitem{FMT22}
D.~S.~Freed, G.~W.~Moore and C.~Teleman, \emph{Topological symmetry in quantum field theory}, \href{https://doi.org/10.4171/qt/223}{\emph{Quantum Topol.} \textbf{15} (2024) 779} [\href{https://arxiv.org/abs/2209.07471}{\texttt{arXiv:2209.07471}}].

\bibitem{CS99}
J.~H.~Conway and N.~J.~A.~Sloane, \emph{Sphere packings, lattices and groups}, 3rd ed., Springer, New York (1999).

\bibitem{RSS22}
K.~Roumpedakis, S.~Seifnashri and S.-H.~Shao, \emph{Higher gauging and non-invertible condensation
defects}, \emph{Commun.\ Math.\ Phys.} \textbf{401} (2023) 3043
[\href{https://arxiv.org/abs/2204.02407}{\texttt{arXiv:2204.02407}}].

\bibitem{weil}
F.~Str\"omberg, \emph{Weil representations associated with finite quadratic modules}, \emph{Math.\ Z.} \textbf{275} (2013) 509--527
[\href{https://arxiv.org/abs/1108.0202}{\texttt{arXiv:1108.0202}}].


\bibitem{Romaidis:2023zpx}
I.~Romaidis and I.~Runkel,
\emph{CFT correlators and mapping class group averages},
\href{https://arxiv.org/abs/2309.14000}{\texttt{arXiv:2309.14000}}.

\bibitem{TY98}
D.~Tambara and S.~Yamagami, \emph{Tensor categories with fusion rules of self-duality for finite abelian groups}, \emph{J.\ Algebra} \textbf{209} (1998) 692.

\bibitem{CLSWY19}
C.-M.~Chang, Y.-H.~Lin, S.-H.~Shao, Y.~Wang and X.~Yin, \emph{Topological defect lines and renormalization group flows in two dimensions}, \href{https://doi.org/10.1007/JHEP01(2019)026}{\emph{JHEP} \textbf{01} (2019) 026} [\href{https://arxiv.org/abs/1802.04445}{\texttt{arXiv:1802.04445}}].


\bibitem{KTT19}
A.~Karch, D.~Tong and C.~Turner, \emph{A web of 2d dualities: $\mathbb{Z}_2$ gauge fields and Arf invariants}, \href{https://doi.org/10.21468/SciPostPhys.7.1.007}{\emph{SciPost Phys.} \textbf{7} (2019) 007} [\href{https://arxiv.org/abs/1902.05550}{\texttt{arXiv:1902.05550}}].

\bibitem{GK16}
D.~Gaiotto and A.~Kapustin, \emph{Spin TQFTs and fermionic phases of matter}, \emph{Int.\ J.\ Mod.\ Phys.\ A} \textbf{31} (2016) 1645044 [\href{https://arxiv.org/abs/1505.05856}{\texttt{arXiv:1505.05856}}].



\end{thebibliography}
\end{document}